\newcounter{foo}
\newtheorem{Lemma}[foo]{Lemma}
\newtheorem{Proposition}[foo]{Proposition}
\newtheorem{Definition}{Definition}
\def\url#1{{\tt #1}}
\def\tr{\operatorname{tr}}
\def\idty{{\mathbbm 1}} 
\def\Rl{{\mathbb R}}\def\Cx{{\mathbb C}}
\def\brAAket#1#2#3{\langle#1\vert#2\vert#3\rangle}
\def\ket #1{\vert#1\rangle}
\def\ketbra #1#2{{\vert#1\rangle\!\langle#2\vert}}
\def\kettbra#1{\ketbra{#1}{#1}}
\def\abs#1{\vert#1\vert}
\let\veps\varepsilon
\def\error{\varepsilon}
\def\err#1{\error(A'_{#1}|A_{#1})}    
\def\errM#1{\error_M(A'_{#1}|A_{#1})} 
\def\errC#1{\error_C(A'_{#1}|A_{#1})} 
\def\errE#1{\error_E(A'_{#1}|A_{#1})} 
\def\errA#1{\error_D(A'_{#1}|A_{#1})} 
\def\errL#1{\error_L(A'_{#1}|A_{#1})} 
\def\UR#1{{\mathcal U}_{#1}} 
\def\errorL#1{\error_L(#1|A)} 
\def\cB{{\mathcal B}}  
\def\ii{{\mathbbm 1}} 
\def\ops{{\mathcal S}}
\def\maxEV{\lambda_{\rm max}}
\def\cc{\check c} 
\def\trans{^{\sf T}}
\def\boxformula#1{\fbox{{$\displaystyle#1$}}}
\title{Measurement Uncertainty for Finite Quantum Observables}
\author{Ren\'e Schwonnek\footnote{rene.schwonnek@itp.uni-hannover.de}, 
       David Reeb\footnote{david.reeb@itp.uni-hannover.de}, and 
       Reinhard F. Werner\footnote{reinhard.werner@itp.uni-hannover.de}}
\affil{Quantum Information Group, Institute for Theoretical Physics,\\ Leibniz Universit\"at Hannover}
\date{April 1, 2016}
\begin{document}

\maketitle

\begin{abstract}Measurement uncertainty relations are lower bounds on the errors of any approximate joint measurement of two or more quantum observables. The aim of this paper is to provide methods to compute optimal bounds of this type. The basic method is semidefinite programming, which we apply to arbitrary finite collections of projective observables on a finite dimensional Hilbert space. The quantification of errors is based on an arbitrary cost function, which assigns a penalty to getting result $x$ rather than $y$, for any pair $(x,y)$. This induces a notion of optimal transport cost for a pair of probability distributions, and we include an appendix with a short summary of optimal transport theory as needed in our context. There are then different ways to form an overall figure of merit from the comparison of distributions. We consider three, which are related to different physical testing scenarios. The most thorough test compares the transport distances between the marginals of a joint measurement and the reference observables for every input state. Less demanding is a test just on the states for which a ``true value'' is known in the sense that the reference observable yields a definite outcome. Finally, we can measure a deviation as a single expectation value by comparing the two observables on the two parts of a maximally entangled state. All three error quantities have the property that they vanish if and only if the tested observable is equal to the reference. The theory is illustrated with some characteristic examples.\end{abstract}

\section{Introduction}

Measurement uncertainty relations are quantitative expressions of complementarity. As Bohr often emphasized, the predictions of quantum theory are always relative to some definite experimental arrangement, and these settings  often exclude each other. In particular, one has to make a choice of measuring devices, and typically quantum observables cannot be measured simultaneously. This often used term is actually misleading, because time has nothing to do with it. For a better formulation recall that quantum experiments are always statistical, so the predictions refer to the frequency with which one will see certain outcomes when the whole experiment is repeated very often. So the issue is not {\it simultaneous} measurement of two observables, but {\it joint} measurement in the same shot. That is, a device $R$ is a joint measurement of observable $A$ with outcomes $x\in X$ and observable $B$ with outcomes $y\in Y$, if it produces outcomes of the form $(x,y)$ in such a way that if we ignore outcome $y$, the statistics of the $x$ outcomes is always (i.e., for every input state) the same as obtained with a measurement of $A$, and symmetrically for ignoring $x$ and comparing with $B$. It is in this sense that non-commuting projection valued observables fail to be jointly measurable.

However, this is not the end of the story. One is often interested in {\it approximate} joint measurements. One such instance is Heisenberg's famous $\gamma$-ray microscope \cite{Heisenberg}, in which a particle's position is measured by probing it with light of some wavelength $\lambda$, which from the outset sets a scale for the accuracy of this position measurement. Naturally, the particle's momentum is changed by the Compton scattering, so if we make a momentum measurement on the particles after the interaction, we will find a different distribution from what would have been obtained directly. Note that in this experiment  we get from every particle a position value and momentum value. Moreover, errors can be quantified by comparing the respective distributions with some ideal reference: The {\it accuracy} of the microscope position measurement is judged by the degree of agreement between the distribution obtained and the one an ideal position measurement would give. Similarly, the {\it disturbance} of momentum is judged by comparing a directly measured distribution with the one after the interaction. The same is true for the {\it uncontrollable disturbance} of momentum. This refers to a scenario, where we do not just measure momentum after the interaction, but try to build a device that recovers the momentum in an optimal way, by making an arbitrary measurement on the particle after the interaction, utilizing everything that is known about the microscope, correcting all known systematic errors, and even using the outcome of the position measurement. The only requirement is that at the end of the experiment, for each individual shot, some value of momentum must come out. Even then it is impossible to always reproduce the pre-microscope distribution of momentum.
The tradeoff between accuracy and disturbance is quantified by a measurement uncertainty relation. Since it simply quantifies the impossibility of a joint exact measurement, it simultaneously gives bounds on how an approximate momentum measurement irretrievably disturbs position. The basic setup is shown in Fig.\ \ref{fig:setup}.

\begin{figure}[H]
\centering
\def\svgwidth{0.7\textwidth} \input{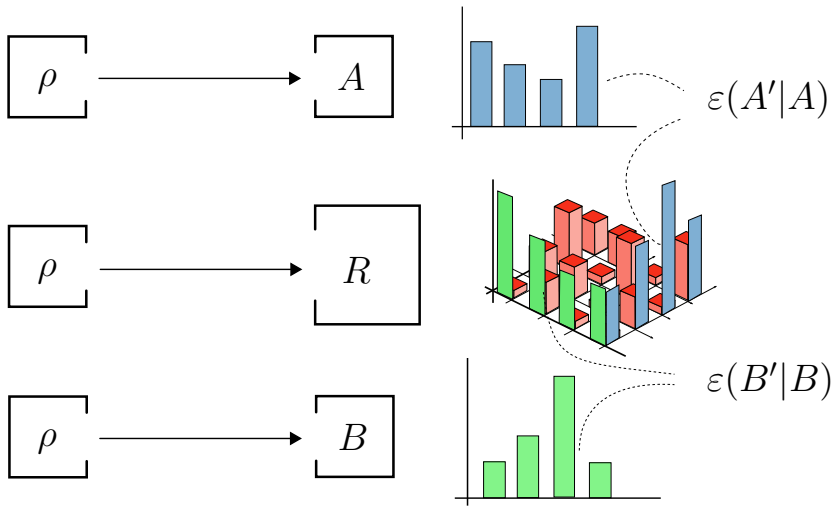}
\caption{Basic setup of measurement uncertainty relations. The approximate joint measurement $R$ is shown in the middle, with its array of output probabilities. The marginals $A'$ and $B'$ of this array are compared with the output probabilities of the reference observables $A$ and $B$, shown at the top and at the bottom. The uncertainties $\error(A'|A)$ and $\error(B'|B)$ are quantitative measures for the difference between these distributions.\label{fig:setup}}
\end{figure}

Note that in this description of errors we did not ever bring in a comparison with some hypothetical ``true value''. Indeed it was noted already by Kennard \cite{Kennard} that such comparisons are problematic in quantum mechanics. Even if one is willing to feign hypotheses about the true value of position, as some hidden variable theorists will, an operational criterion for agreement will always have to be based on statistical criteria, i.e., the comparison of distributions. Another fundamental feature of this view of errors is that it provides a figure of merit for the comparison of two devices, typically some ideal reference observable and and an approximate version of it. An ``accuracy'' $\veps$ in this sense is a promise that no matter which input state is chosen, the distributions will not deviate by more than $\veps$. Such a promise does not involve a particular state. This is in contrast to {\it preparation uncertainty} relations, which quantify the impossibility to find a state for which the distributions of two given observables (e.g., position and momentum) are both sharp.

Measurement uncertainty relations in the sense described here were first introduced for position and momentum in \cite{Wer04}, and were initially largely ignored. A bit earlier, an attempt by Ozawa \cite{Ozawa} to quantify error-disturbance tradeoffs with state dependent and somewhat unfortunately chosen \cite{BLW-rms} quantities had failed, partly for reasons already pointed out in \cite{Appleby}. When experiments confirmed some predictions of the Ozawa approach (including the failure of the error-disturbance tradeoff), a debate ensued \cite{BLW1,OzawaDis,BLW2,ApplebyNew}. Its unresolved part is whether a meaningful role for Ozawa's definitions can be found. Technically, the computation of measurement uncertainty for position and momentum in \cite{BLW2} carries over immediately to more general phase spaces \cite{WerFou,BKW}. Apart from some special further computed instances \cite{AMU,BLW2014}, this remained the only case in which sharp measurement uncertainty relations could be obtained. This was in stark contrast with preparation uncertainty, for which an algorithm based on solving ground state problems \cite{AMU} efficiently provides the optimal relations for generic sets of observables.  The main aim of the current paper is to provide efficient algorithms also for sharp measurement uncertainty relations.

In order to do that we restrict the setting in some ways, but allow maximal generality in others. We will restrict to finite dimensional systems, and reference observables which are projection valued and non-degenerate. Thus, each of the ideal observables will basically be given by an orthonormal basis in the same $d$-dimensional Hilbert space. The labels of this basis are the outcomes $x\in X$ of the measurement, where $X$ is a set of $d$ elements. We could choose all $X=\{1,\ldots,d\}$, but it will help to keep track of things using a separate set for each observable. Moreover, this includes the choice $X\subset\Rl$, the set of eigenvalues of some hermitian operator. We allow not just two observables but any finite number $n\geq2$ of them. This is makes some expressions easier to write down, since the sum of an expression involving observable $A$ and analogous one for observable $B$ becomes an indexed sum. We also allow much generality in the way errors are quantified.  In earlier works, we relied on two elements to be chosen for each observable, namely a metric $D$ on the outcome set, and an error exponent $\alpha$, distinguishing, say absolute ($\alpha=1$), root-mean-square ($\alpha=2$), and maximal ($\alpha=\infty$) deviations. Deviations were then averages of $D(x,y)^\alpha$. Here we generalize further to an arbitrary {\it cost function} $c:X\times X\to\Rl$, which we take to be positive, and zero exactly on the diagonal (e.g., $c(x,y)= D(x,y)^\alpha$), but not necessarily symmetric. Again this generality comes mostly as a simplification of notation. For a reference observable $A$ with outcome set $X$ and an approximate version $A'$ with the same outcome set, this defines an error $\err{}$. Our aim is to provide algorithms for computing the {\it uncertainty diagram} associated with such data, of which Fig.~\ref{fig:URs} gives an example. The given data for such a diagram are $n$ projection valued observables $A_1,\ldots,A_n$, with outcome sets $X_i$, for each of which we are given also a cost function $c_i:X_i\times X_i\to\Rl$ for quantifying errors. An approximate joint measurement is then an observable $R$ with outcome set $\bigtimes_i X_i$, and hence with POVM elements $R(x_1,\ldots,x_n)$, where $x_i\in X_i$. By ignoring every output but one we get the $n$ marginal observables
\begin{equation}\label{marginal}
  A'_i(x_i)=\sum_{x_1,\ldots,x_{i-1},x_{i+1},\ldots,x_n} R(x_1,\ldots,x_n)
\end{equation}
and a corresponding tuple
\begin{equation}\label{errortuple}
  \vec\error(R)=\bigl(\err1,\ldots, \err n\bigr)
\end{equation}
of errors. The set $\UR L$ of such tuples, as $R$ runs over all joint measurements, is the \emph{uncertainty region}. The surface bounding this set from below describes the uncertainty tradeoffs. For $n=2$ we call it the tradeoff curve.
Measurement uncertainty is the phenomenon that, for general reference observables $A_i$, the uncertainty region is bounded away from the origin. In principle there are many ways to express this mathematically, from a complete characterization of the exact \textit{tradeoff} curve, which is usually hard to get, to bounds which are simpler to state, but suboptimal. Linear bounds will play a special role in this paper.

\begin{figure}[htb]
\centering \def\svgwidth{0.5\textwidth} \input{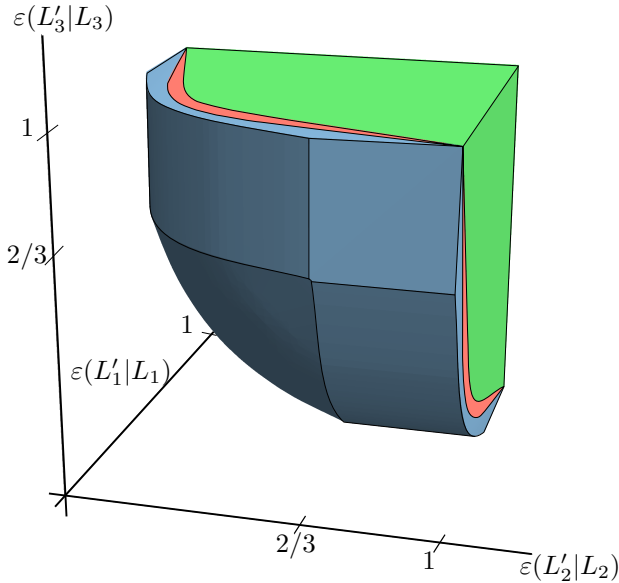}
 \caption{Uncertainty regions for three reference observables, namely the angular momentum
components $L_1,L_2,L_3$ for spin $1$, each with outcome set $X=\{-1,0,+1\}$ and the choice
$c(x,y)=(x-y)^2$ for the cost function. The three regions indicated correspond to the different
overall figures of merit
$\errM{}\geq\errC{}\geq\errE{}$ described in Sect.~\ref{sec:measures}.
\label{fig:URs}}
\end{figure}

We will consider three ways to build a single error quantity out of the comparison of distributions, denoted by $\errM{}$, $\errC{}$, and $\errE{}$. These will be defined in Sect.~\ref{sec:measures}. For every choice of observables and cost functions, each will give an uncertainty region, denoted  by $\UR M$, $\UR C$, and $\UR E$, respectively.  Since the errors are all based on the same cost function $c$, they are directly comparable (see Fig.~\ref{fig:URs}). We show in Sect.~\ref{sec:convex} that the three regions are convex, and hence characterized completely by linear bounds. In Sect.~\ref{sec:sdp} we show how to calculate the optimal linear lower bounds by semidefinite programs. Finally, an Appendix collects the basic information on the beautiful theory of optimal transport, which is needed in Sects.\ \ref{sec:errM} and \ref{sec:URM}.

\section{Deviation measures for observables}\label{sec:measures}
Here we define the measures we use to quantify how well an observable $A'$ approximates a desired observable $A$. In this section we do not use the marginal condition \eqref{marginal}, so $A'$ is an arbitrary observable with the same outcome set $X$ as $A$, i.e., we drop all indices $i$ identifying the different observables. Our error quantities are operational in the sense that each is motivated by an experimental setup, which will in particular provide a natural way to measure them. All error definitions are based on the same cost function $c:X\times X\to\Rl$, where $c(x,y)$ is the ``cost'' of getting a result $x\in X$, when $y\in X$ would have been correct. The only assumptions are that $c(x,y)\geq0$ with $c(x,y)=0$ iff $x=y$.

As described above, we consider a quantum system with Hilbert space $\Cx^d$. As a reference observable $A$ we allow any complete von Neumann measurement on this system, that is, any observable whose the set $X$ of possible measurement outcomes has size $|X|=d$ and whose POVM elements $A(y)\in\cB(\Cx^d)$  ($y\in X$) are mutually orthogonal projectors of rank $1$; we can then also write $A(y)=\kettbra{\phi_y}$ with an orthonormal basis $\{\phi_y\}$ of $\Cx^d$. For the approximating observable $A'$ the POVM elements $A'(x)$ (with $x\in X$) are arbitrary with $A'(x)\geq0$  and $\sum_{x\in X}A(x)=\ii$.

The comparison will be based on a comparison of output distributions, for which we use the following notations: Given a quantum state $\rho$ on this system, i.e., a density operator with $\rho\geq0$ and $\tr\rho=1$, and an observable such as $A$, we will denote the outcome distribution by $\rho A$, so  $(\rho A)(y):=\tr(\rho A_y)$. This is a probability distribution on the outcome set $X$ and can be determined physically as the empirical outcome distribution after many experiments.

For comparing just two probability distributions $p:X\to\Rl_+$ and $q:X\to\Rl_+$, a canonical choice is the ``minimum transport cost''
\begin{equation}\label{c4probs}
  \cc(p,q):=\inf_\gamma\Bigl\{\sum_{xy}c(x,y)\gamma(x,y)\bigm| \gamma\ \mbox{couples}\ p\ \mbox{to}\ q\Bigr\},
\end{equation}
where the infimum runs over the set of all \emph{couplings}, or ``transport plans'' $\gamma:X\times X\to\Rl_+$ of $p$ to $q$, i.e., the set of all probability distributions $\gamma$ satisfying the marginal conditions $\sum_y\gamma(x,y)=p(x)$ and $\sum_{x}\gamma(x,y)=q(y)$. The motivations for this notion, and the methods to compute it efficiently are described in the Appendix. Since $X$ is finite, the infimum is over a compact set, so it is always attained. Moreover, since we assumed $c\geq0$ and $c(x,y)=0\Leftrightarrow x=y$, we also have $\cc(p,q)\geq0$ with equality iff $p=q$. If one of the distributions, say $q$, is concentrated on a point $\widetilde y$, only one coupling exists, namely $\gamma(x,y)=p(x)\delta_{y{\widetilde y}}$. In this case we abbreviate $\cc(p,q)=\cc(p,{\widetilde y})$, and get
\begin{equation}\label{ccx}
 \cc(p,{\widetilde y})=\sum_xp(x)c(x,{\widetilde y}),
\end{equation}
i.e., the average cost of moving all the points $x$ distributed according to $p$ to ${\widetilde y}$.

\subsection{Maximal measurement error $\errM{}$.}\label{sec:errM}
\begin{figure}[h]
\centering \def\svgwidth{0.7\textwidth} \input{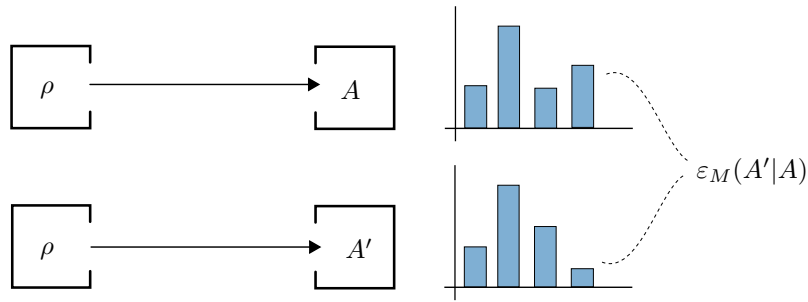}
\caption{For the maximal measurement error $\errM{}$ the transport distance of output distributions is maximized over all input states $\rho$.}
\end{figure}

The worst case error over \emph{all} input states is
\begin{equation}\label{define-errM}
\errM{}:=\sup_\rho\bigl\{\cc(\rho A',\rho A)\bigm|\rho\text{ quantum state on }\Cx^d\bigr\},
\end{equation}
which we call the \emph{maximal measurement error}.  Note that, like the cost function $c$ and the transport costs $\cc$, the measure $\errM{}$ need not be symmetric in its arguments, which is sensible as the reference and approximating observables have distinct roles. Similar definitions for the deviation of an approximating measurement from an ideal one have been made, for specific cost functions, in \cite{BLW1,BLW2} and \cite{AMU} before.

The definition (\ref{define-errM}) makes sense even if the reference observable $A$ is not a von Neumann measurement. Instead, the only requirement is that $A$ and $A'$ be general observables with the same (finite) outcome set $X$, not necessarily of size $d$. All our results below that involve only the maximal measurement error immediately generalize to this case as well.

One can see that it is expensive to determine the quantity $\errM{}$ experimentally according to the definition: one would have to measure and compare the outcome statistics $\rho A'$ and $\rho A$ for all possible input states $\rho$, which form a continuous set. The following definition of observable deviation alleviates this burden.

\subsection{Calibration error $\errC{}$.}\label{sect-define-errC}
\begin{figure}[h]
\centering \def\svgwidth{0.7\textwidth} \input{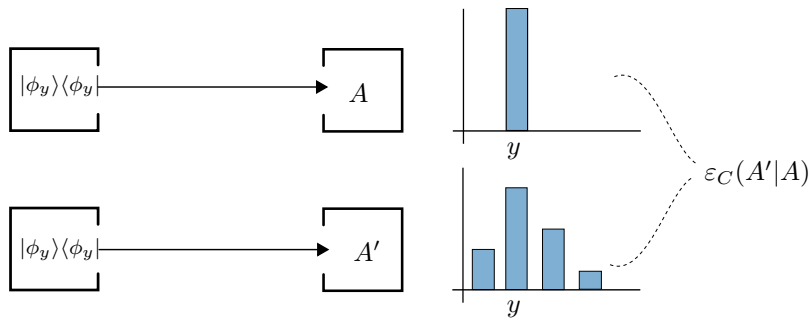}
\caption{For the calibration error $\errC{}$, the input state is constrained to the eigenstates of $A$, say with sharp $A$-value $y$, and the cost of moving the $A'$-distribution to $y$ is maximized over $y$.}
\end{figure}

Calibration is a process by which one tests a measuring device on inputs (or measured objects) for which the ``true value'' is known. Even in quantum mechanics we can set this up by demanding that the measurement of the reference observable on the input state gives a sharp value $y$. In a general scenario with continuous outcomes this can only be asked with a finite error $\delta$, which goes to zero at the end \cite{BLW1}, but in the present finite scenario we can just demand $(\rho A)(y)=1$. Since, for every outcome $y$ of a von Neumann measurement, there is only one state with this property (namely $\rho=\kettbra{\phi_y}$) we can simplify even further, and define the \emph{calibration error} by
\begin{equation}\label{define-errC}
\errC{}:=\sup_{y,\rho}\{\cc(\rho A',y)\bigm| \tr(\rho A(y))=1\}=\max_y \sum_x\brAAket{\phi_y}{A'(x)}{\phi_y}\ c(x,y).
\end{equation}

Note that the calibration idea only makes sense when there are sufficiently many states for which the reference observable has deterministic outcomes, i.e., for projective observables $A$.

A closely related quantity has recently been proposed by Appleby \cite{ApplebyNew}. It is formulated for real valued quantities with cost function $c(x,y)=(x-y)^2$, and has the virtue that it can be expressed entirely in terms of first and second moments of the probability distributions involved. So for any $\rho$, let $m$ and $v$ be the mean and variance of $\rho A$, and $v'$ the mean quadratic deviation of $\rho A'$ from $m$. Then Appleby defines
\begin{equation}\label{Appleby}
  \errA{}=\sup_\rho(\sqrt{v'}-\sqrt{v})^2.
\end{equation}
Here we added the square to make Appleby's quantity comparable to our variance-like (rather than standard deviation-like) quantities, and chose the letter $D$, because Appleby calls this the $D$-error. Since in the supremum we have also the states for which $A$ has a sharp distribution (i.e. $v=0$), we clearly have $\errA{}\geq\errC{}$. On the other hand, let $\Phi(x)=t(x-m)^2$ and $\Psi(y)=t/(1-t)(y-m)^2$ with some parameter $t\in(-\infty,1)$. Then one easily checks that $\Phi(x)-\Psi(y)\leq(x-y)^2$, so $(\Phi,\Psi)$ is a pricing scheme in the sense defined in the Appendix. Therefore
\begin{equation}\label{aappleest}
  \cc(\rho A',\rho A)\geq \sum_x(\rho A')(x)\Phi(x)-\sum_y(\rho A)(y)\Psi(y)= t\; v'-\frac t{1-t}\;v.
\end{equation}
Maximizing this expression over $t$ gives exactly \eqref{Appleby}. Therefore $\errC{}\leq\errA{}\leq\errM{}$.

\subsection{Entangled reference error $\errE{}$.}
\begin{figure}[h]
\centering \def\svgwidth{0.7\textwidth} \input{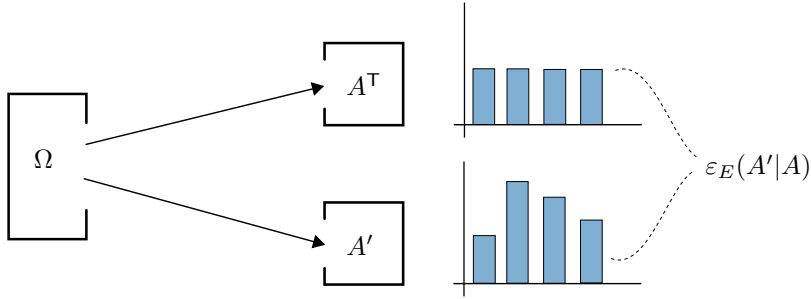}
\caption{The entangled reference error $\errE{}$ is a single expectation value, namely of the cost $c(x,y)$, where $y$ is the output of $A\trans$ and $x$ the output of $A'$. Like the other error quantities
   this expectation vanishes iff $A'=A$.}
\end{figure}
In quantum information theory a standard way of providing a reference state for later comparison is by applying a channel or observable to one half of a maximally entangled system. Two observables would be compared by measuring them (or suitable modifications) on the two parts of a maximally entangled system. Let us denote the entangled vector by $\Omega=d^{-1/2}\sum_k\ket{kk}$. Since later we will look at several distinct reference observables, the basis kets $\ket k$ in this expression have no special relation to $A$ or its eigenbasis $\phi_y$. We denote by $X\trans$ the transpose of an operator $X$ in the $\ket k$ basis, and by $A\trans$ the observable with POVM elements $A(y)\trans=\kettbra{\overline{\phi_y}}$, where $\overline{\phi_y}$ is the complex conjugate of $\phi_y$ in $\ket k$-basis. These transposes are needed due to the well-known relation $(X\otimes\idty)\Omega=(\idty\otimes X\trans)\Omega$. We now consider an experiment, in which $A'$ is measured on the first part and $A\trans$ on the second part of the entangled system, so we get the outcome pair $(x,y)$ with probability
\begin{equation}\label{entOut}
  p(x,y)=\brAAket\Omega{A'(x)\otimes A(y)\trans}\Omega=\brAAket\Omega{A'(x)A(y)\otimes\idty}\Omega=\frac1d \tr\bigl(A'(x)A(y)\bigr).
\end{equation}
As $A$ is a complete von Neumann measurement, this probability distribution is concentrated on the diagonal ($x=y$) iff $A'=A$, i.e., there are no errors of $A'$ relative to $A$. Averaging with the error costs we get a quantity we call the {\it entangled reference error}
\begin{equation}\label{define-errE}
\errE{}:=\sum_{xy} \frac1d \tr\bigl(A'(x)A(y)\bigr)\ c(x,y).
\end{equation}

Note that this quantity is measured as a single expectation value in the experiment with source $\Omega$. Moreover, when we later want to measure different such deviations for the various marginals, the source and the tested joint measurement device can be kept fixed, and only the various reference observables $A_i\trans$ acting on the second part need to be adapted suitably.

\subsection{Summary and comparison}

The quantities $\errM{}$, $\errC{}$ and $\errE{}$ constitute three different ways to quantify the deviation of an observable $A'$ from a projective reference observable $A$. Nevertheless, they are all based on the same distance-like measure, the cost function $c$ on the outcome set $X$. Therefore it makes sense to compare them quantitatively. Indeed, they are ordered as follows:
\begin{equation}
\errM{}\geq\errC{}\geq\errE{}.
\end{equation}
Here the first inequality follows by restricting the supremum \eqref{define-errM} to states which are sharp for $A$, and the second by noting the \eqref{define-errC} is the maximum of a function of $y$, of which \eqref{define-errE} is the average.

Moreover, as we argued before Eq.\ \eqref{define-errE}, $\errE{}=0$ if and only if $A=A'$, which is hence equivalent also to $\errM{}=0$ and $\errC{}=0$.

\section{Convexity of uncertainty diagrams}\label{sec:convex}
In this section we will consider tuples $(A_1,\ldots,A_n)$ of projection valued non-degenerated observables, as described in the introduction. We will collect some basic properties of the uncertainty regions $\UR L$, where $L\in\{M,C,E\}$, that is, 
\begin{equation}\label{URL}
  \UR L:=\Bigl\{\bigl(\errL 1,\ldots, \errL n\bigr)\Bigm| A_i'\ \mbox{marginals of a joint measurement}\Bigr\}.
\end{equation}

For two observables $B_1$ and $B_2$ with the same outcome set we can easily realize their mixture, or convex combination $B=tB_1+(1-t)B_2$ by flipping a coin with probability $t$ for heads in each instance and then apply $B_1$ when heads is up and $B_2$ otherwise. In terms of POVM elements this reads $B(x)=t B_1(x)+ (1-t) B_2(x)$.
We show first that this mixing operation does not increase the error quantities from Sect.~\ref{sec:measures}.

\begin{Lemma}
\label{convexity}
For $L\in\{M,D,C,E\}$ the error quantity $\errorL{B}$, is a convex function of $B$ , i.\,e. for $B=t B_1+(1-t)B_2$ and $t\in[0,1]$:
\begin{align}
\errorL{B}&\leq t\;\errorL{B_1}+(1-t)\;\errorL{B_1}.
\end{align}
\end{Lemma}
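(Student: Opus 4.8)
The plan is to treat the three measures separately but with a common strategy: in each case $\errorL{B}$ is an optimization (a supremum, or a fixed linear functional) of a quantity that is \emph{affine} in the POVM $B$, and suprema of affine — hence convex — functions are convex. So the first step is, for each $L$, to identify the object being optimized and check its affinity in $B$. For $L=E$ this is immediate: by \eqref{define-errE}, $\errorE{B}=\sum_{xy}\frac1d\tr(B(x)A(y))\,c(x,y)$ is literally a linear functional of the POVM elements $B(x)$, so it is affine in $B$ and in particular convex, with equality in the Lemma's inequality.

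For $L=M$ the key point is that, for \emph{fixed} input state $\rho$, the map $B\mapsto \cc(\rho B,\rho A)$ is convex. This reduces to the elementary fact that $p\mapsto\cc(p,q)$ is a convex function of the probability distribution $p$ (for fixed $q$): given couplings $\gamma_1$ of $p_1$ to $q$ and $\gamma_2$ of $p_2$ to $q$, the mixture $t\gamma_1+(1-t)\gamma_2$ couples $tp_1+(1-t)p_2$ to $q$ and has transport cost $t\sum c\gamma_1+(1-t)\sum c\gamma_2$, so taking $\gamma_i$ optimal gives $\cc(tp_1+(1-t)p_2,q)\le t\,\cc(p_1,q)+(1-t)\,\cc(p_2,q)$. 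Since $\rho\mapsto\rho B$ is linear, $B\mapsto\rho B$ is affine, so $B\mapsto\cc(\rho B,\rho A)$ is convex; then $\errM{B}=\sup_\rho\cc(\rho B,\rho A)$ is a supremum of convex functions of $B$, hence convex. The argument for $L=C$ is identical, using the representation in \eqref{define-errC}: $\errC{B}=\max_y\sum_x\brAAket{\phi_y}{B(x)}{\phi_y}c(x,y)$ is a maximum over the finitely many $y$ of functions that are \emph{linear} in $B$, hence convex. And $L=D$ (Appleby's error) follows from $\errA{B}\le\errM{B}$ together with the same supremum-of-convex structure, or more directly from \eqref{aappleest}: for fixed $t$ and $\rho$, $tv'-\frac t{1-t}v$ is affine in $B$ (it is $\sum_x(\rho B)(x)\Phi(x)$ minus a term not depending on $B$), and $\errA{B}$ is the sup over $\rho$ and $t\in(-\infty,1)$ of these affine functions, hence convex.

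I do not expect a genuine obstacle here; the whole content is the ``convexity of transport cost in one marginal'' observation plus ``sup of convex is convex,'' both routine. The only things to be careful about are: writing the convex-combination POVM correctly as $B(x)=tB_1(x)+(1-t)B_2(x)$ and noting it is again a legitimate POVM (each element $\ge0$, summing to $\idty$); making sure the infimum in \eqref{c4probs} is genuinely attained so that ``take optimal couplings'' is legitimate (the excerpt already notes this, since $X$ is finite); and, for $\errM{}$ and $\errA{}$, that passing from a pointwise (in $\rho$, or in $t$) convex bound to the supremum preserves convexity because a pointwise supremum of convex functions is convex regardless of whether the sup is attained. I would state the transport-convexity fact as a one-line sublemma, then dispatch the four cases $M,C,E,D$ in a couple of sentences each. (One should also note the evident typo in the displayed inequality of the Lemma, where the right-hand side should read $t\,\errorL{B_1}+(1-t)\,\errorL{B_2}$.)
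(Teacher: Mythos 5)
Your proposal is correct, and for three of the four cases ($L=E,C,D$) it is essentially the paper's argument: exhibit $\errorL{B}$ as a (pointwise supremum of) functions affine in $B$, and invoke convexity of suprema. The one place where you genuinely diverge is $L=M$. The paper handles the awkward fact that $\cc$ is defined as an \emph{infimum} by passing to the Kantorovich dual: it rewrites $\cc(\rho B,\rho A)$ via \eqref{maxalpha} as a supremum over pricing schemes of expressions affine in $B$, so that $\errorM{B}$ is again a supremum of affine functions. You instead stay on the primal side and prove directly that $p\mapsto\cc(p,q)$ is convex by mixing optimal couplings; composing with the linear map $B\mapsto\rho B$ and taking $\sup_\rho$ then gives convexity. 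Your route is more elementary — it needs no duality theory from the Appendix, only the slightly more general fact that a supremum of \emph{convex} (not just affine) functions is convex — while the paper's route has the advantage of setting up exactly the dual representation that is reused later in Sect.~\ref{sec:URM} for the SDP formulation. One small caution: your parenthetical claim that the $D$ case ``follows from $\errA{}\leq\errM{}$ together with the same supremum-of-convex structure'' is not an argument (being dominated by a convex function proves nothing about convexity), but the direct argument you give immediately afterwards — $\errorA{B}$ as a supremum over $(\rho,t)$ of functions affine in $B$ via \eqref{aappleest} — is exactly the paper's reasoning and is correct. You are also right that the displayed inequality in the Lemma contains a typo and should end in $(1-t)\,\errorL{B_2}$.
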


\begin{proof}
The basic fact used here is that the pointwise supremum of affine functions (i.e., those for which equality holds in the definition of a convex function) is convex. This is geometrically obvious, and easily verified from the definitions.
Hence we only have to check that each of the error quantities is indeed represented as a supremum of functions, which are affine in the observable $B$.

For $L=E$ we even get an affine function, because \eqref{define-errE} is linear in $A'$. For $L=C$ equation \eqref{define-errC} has the required form. For $L=M$ the definition \eqref{define-errM} is as a supremum, but the function $\cc$ is defined as an infimum. However, we can use the duality theory described in the Appendix (e.g. in \eqref{maxalpha}) to write it instead as a supremum over pricing schemes, of an expression which is just the expectation of $\Phi(x)$ plus a constant, and therefore an affine function. Finally, for Appleby's case \eqref{Appleby}, we get the same supremum, but over a subset of pricing schemes (the quadratic ones, see below \eqref{Appleby}).
\end{proof}

The convexity of the error quantities distinguishes measurement from preparation uncertainty. Indeed, the variances appearing in preparation uncertainty relations are typically concave functions, because they arise from minimizing the expectation of $(x-m)^2$ over $m$. Consequently, the preparation uncertainty regions may have gaps, and non-trivial behaviour on the side of large variances. The following proposition will show that measurement uncertainty regions are better behaved.

For every cost function $c$ on a set $X$ we can define a ``radius'' $\overline{c}^*$, the largest transportation cost from the uniform distribution (the ``center'' of the set of probability distributions) and a ``diameter''
$c^*$, the largest transportation cost between any two distributions:
\begin{equation}\label{radiusdia}
  \overline{c}^*=\max_y\sum_x c(x,y)/d \hskip 60pt c^*=\max_{xy}c(x,y).
\end{equation}

\begin{Proposition}\label{convexity-set}
Let $n$ observables $A_i$ and cost functions $c_i$ be given, and define $c^M_i=c^C_i=c^*_i$ and $c_i^E=\overline{c_i}^*$. Then, for $L\in\{M,C,E\}$, the uncertainty regions $\UR L$ is a convex set and has the following (monotonicity) property:
When $\vec x=(x_1,\ldots,x_n)\in\UR L$ and $\vec y=(y_i,\ldots,y_n)\in\Rl^n$ such that $x_i\leq y_i\leq c^L_i$,
then $\vec y\in\UR L$.
\end{Proposition}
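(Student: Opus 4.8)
The plan is to prove convexity and the monotonicity property separately, both by exhibiting explicit joint measurements $R$ with the desired error tuples. Convexity is nearly immediate from Lemma~\ref{convexity}: given two joint measurements $R_1, R_2$ with marginals $A_i^{(1)}, A_i^{(2)}$ realizing error tuples $\vec x^{(1)}, \vec x^{(2)} \in \UR L$, the mixture $R = tR_1 + (1-t)R_2$ (obtained operationally by coin-flipping, as discussed before Lemma~\ref{convexity}) is again a joint measurement, and its $i$-th marginal is $tA_i^{(1)} + (1-t)A_i^{(2)}$. By Lemma~\ref{convexity} the $i$-th error of $R$ is at most $t\,x_i^{(1)} + (1-t)\,x_i^{(2)}$. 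This shows the point $t\vec x^{(1)} + (1-t)\vec x^{(2)}$ dominates (coordinatewise) a point of $\UR L$ — so once the monotonicity property is established, convexity follows; alternatively one argues convexity directly and monotonicity separately. I would do monotonicity first.

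For monotonicity the key observation is that $\UR L$ contains a ``worst'' joint measurement whose marginals carry no information, and more generally that one can degrade any single marginal independently. Concretely: fix a joint measurement $R$ realizing $\vec x \in \UR L$ with marginals $A_i'$. To raise the $i$-th error from $x_i$ up to any target $y_i \in [x_i, c_i^L]$ while leaving the other errors untouched, I would mix only in the $i$-th slot: replace $A_i'$ by $A_i'' = (1-s)A_i' + s\,T_i$, where $T_i$ is a fixed ``trivial'' observable realizing the maximal possible $i$-th error, and implement this by applying $R$ with probability $1-s$ and, with probability $s$, applying $R$ on all slots $j\neq i$ but overriding slot $i$ with an independent measurement of $T_i$. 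This is a legitimate joint measurement; its $j$-th marginal for $j \neq i$ is unchanged, and its $i$-th marginal is $A_i''$. By Lemma~\ref{convexity} (convexity of $\errorL{\cdot}$ in the $i$-th argument), and by intermediate-value/continuity of $s \mapsto \errorL{(1-s)A_i' + sT_i}$ from $x_i$ at $s=0$ up to its value at $s=1$, every intermediate value is attained; since $T_i$ can be chosen so that its $i$-th error equals $c_i^L$, all of $[x_i, c_i^L]$ is reached. Iterating over $i = 1,\dots,n$ moves $\vec x$ to any $\vec y$ with $x_i \le y_i \le c_i^L$.

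The crux is therefore the choice of $T_i$ and verifying $\errorL{T_i} = c_i^L$ for each of the three figures of merit. For $L = E$: take $T_i$ to be the observable $T_i(x) = (\rho_0 A_i(x))\,\idty$ with $\rho_0$ the maximally mixed state — i.e.\ output $x$ with probability $\tr(A_i(x))/d = 1/d$ regardless of input. Plugging into \eqref{define-errE}, $\errorE{T_i} = \sum_{xy}\frac1d\tr(A_i(y))\,c_i(x,y)\cdot\frac1d$... more carefully, $\frac1d\tr(T_i(x)A_i(y)) = \frac1d \cdot \frac1d \tr(A_i(y)) = \frac1{d^2}\cdot\frac{d}{?}$; the right trivial choice is $T_i(x)=\frac1d\idty$ for each of the $d$ outcomes, giving $\errorE{T_i} = \max$ over the relevant average — one checks this equals $\overline{c_i}^* = c_i^E$, since the entangled-reference distribution becomes uniform in $x$ for each sharp $y$. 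For $L = C$ and $L = M$: choose $T_i$ to be the observable that always outputs the single value $x^*$ achieving $\max_{xy} c_i(x,y) = c_i^*$, i.e.\ $T_i(x) = \delta_{x,x^*}\idty$. Then from \eqref{define-errC}, $\errorC{T_i} = \max_y c_i(x^*,y) = c_i^*$ (choosing $x^*$ as the argmax also in $y$), and similarly \eqref{define-errM} gives $\errorM{T_i} = \sup_\rho \cc(\delta_{x^*}, \rho A_i) = \max_y c_i(x^*, y) = c_i^*$ since the optimal $\rho$ is an eigenstate of $A_i$. This step is the main obstacle, mostly bookkeeping: one must confirm these $T_i$ genuinely maximize each $\errorL{\cdot}$ (an upper bound $\errorL{B} \le c_i^L$ for \emph{all} $B$ follows since transport cost between any two distributions on $X_i$ is bounded by $c_i^*$, and for $L=E$ by $\overline{c_i}^*$ because one marginal is the uniform-weighted eigenbasis overlap), and that $T_i$ is independent of the state so the mixing construction keeps the other marginals exactly fixed. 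Continuity of $\errorL{\cdot}$ along the segment — needed for the intermediate-value argument — is clear since each is a finite supremum of affine functions of $B$ on a compact domain.
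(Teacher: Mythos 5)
Your overall strategy is exactly the paper's: for monotonicity you mix a ``trivial'' observable into a single slot of the joint POVM (your construction is precisely the paper's $\widetilde R$ in \eqref{Rbad}, and the independence of $T_i$ from the input state guarantees the other marginals are untouched), and convexity of $\UR L$ then follows by combining the componentwise convexity of Lemma~\ref{convexity} with monotonicity. The $L=M$ and $L=C$ cases are handled correctly: the constant deterministic output $T_i(x)=\delta_{x,x^*}\idty$ with $x^*$ the maximizer of $\max_y c_i(x^*,y)$ does attain $c_i^*$ for both measures.

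There is, however, a genuine gap in the $L=E$ case. With your choice $T_i(x)=\tfrac1d\idty$, equation \eqref{define-errE} gives $\errorE{T_i}=\sum_{x,y}\tfrac1d\cdot\tfrac1d\tr(A_i(y))\,c_i(x,y)=\tfrac1{d^2}\sum_{x,y}c_i(x,y)$, i.e.\ the \emph{double average} of the cost, which is in general strictly smaller than the radius $\overline{c_i}^*$ (e.g.\ for $c(x,y)=(x-y)^2$ on $\{1,2,3\}$ one gets $4/3$ versus $5/3$). Your intermediate-value argument therefore only establishes monotonicity up to this double average, not up to $c_i^E=\overline{c_i}^*$ as the proposition asserts. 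The fix is to use the \emph{same} deterministic constant output as in the other two cases, $T_i(x)=\delta_{x,x^*}\idty$, now with $x^*$ chosen to maximize the averaged cost $\tfrac1d\sum_y c_i(x^*,y)$; this yields exactly the radius $\overline{c_i}^*$ of \eqref{radiusdia} (this is what the paper does, using one and the same bad observable $B$ for all three measures and noting that only the figure of merit applied to it changes). Relatedly, your claimed universal bound $\errorE{B}\leq\overline{c_i}^*$ for all $B$ is false: the ``maximally wrong relabeling'' $B(x)=\sum_{y:\,x(y)=x}A_i(y)$ with $x(y)=\arg\max_x c_i(x,y)$ attains $\tfrac1d\sum_y\max_x c_i(x,y)$, which can strictly exceed $\overline{c_i}^*$. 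Fortunately that bound is not needed, since the proposition only asserts monotonicity \emph{within} the box $y_i\leq c_i^L$ and not that $\UR L$ is contained in it.
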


\begin{proof}
Let us first clarify how to make the worst possible measurement $B$, according to the various error criteria, for which we go back to the setting of Sect.~\ref{sec:measures}, with just one observable $A$, and cost function $c$. In all cases, the worst measurement is one with constant and deterministic output, i.e., $B(x)=\delta_{x^*,x}\idty$. For $L=C$ and $L=M$ such a measurement will have $\errorL B=\max_y c(x^*,y)$, and we can choose $x^*$ to make this equal to $c^*=c^L$. For $L=E$ we get instead the average, which is maximized by $\overline c^*$.

We can now make a given joint measurement $R$ worse by replacing it partly by a bad one, say for the first observable $A_1$. That is, we set, for $\lambda\in[0,1]$,
\begin{equation}\label{Rbad}
  \widetilde R(x_1,x_2,\ldots,x_n)= \lambda B_1(x_1)\,\sum_{y_1}R(y_1,x_2,\ldots,x_n)+(1-\lambda)R(x_1,x_2,\ldots,x_n).
\end{equation}
Then all marginals $\widetilde A'_i$ for $i\neq1$ are unchanged, but $\widetilde A_1'(x_1)=\lambda B_1(x_1)+(1-\lambda)A'(x_1)$. Now as $\lambda$ changes from $0$ to $1$, the point in the uncertainty diagram will move continuously in the first coordinate direction from $\vec x$ to the point in which the first coordinate is replaced by its maximum value (see Fig.~\ref{fig:convexproof}(left)). Obviously, the same holds for every other coordinate direction, which proves the monotonicity statement of the proposition.
\begin{figure}[htb]
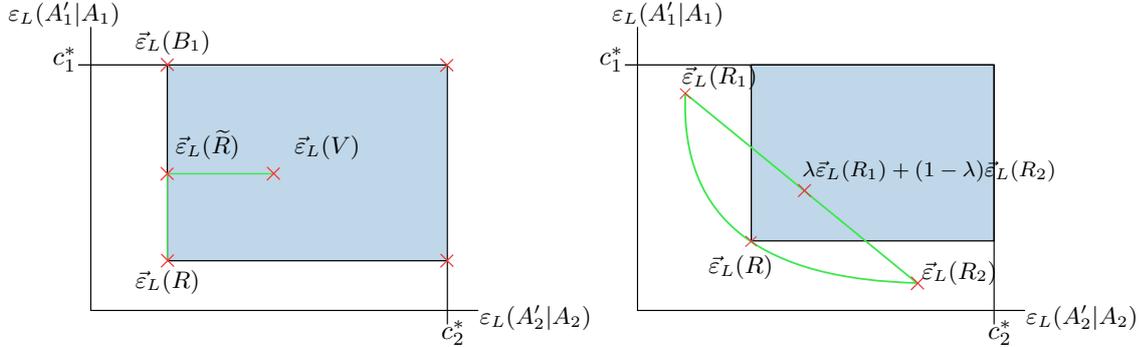


\centering \def\svgwidth{0.45\textwidth} \input{pics/convexproof.tex}
\centering \def\svgwidth{0.45\textwidth} \input{pics/convexproof2.tex} \caption{The blue shaded region corresponds to the monotonicity statement for $\vec\error_L(R)$. (left) $\widetilde R$  is a mixture of $R$ and $B_1$. We can also get an observable $V$ by mixing the second marginal of $\widetilde R$ with $B_2$ and thus reach every point in the blue shaded region. (right) $\vec\error_L(R)$ is componentwise convex. So the mixture of the points $\vec\error_L(R_1)$ and $\vec\error_L(R_2)$ is always in the monotonicity region corresponding to $\vec\error_L(R)$.}\label{fig:convexproof}
\end{figure} \\
Let $R_1$ and $R_2$ be two observables, and let $R=\lambda R_1+ (1-\lambda) R_2$ be their mixture. For proving the convexity of $\UR L$ we will have to show that every point on the line between  $\vec{\error}_L(R_1)$ and $\vec{\error}_L(R_2)$ can be attained by a tuple of errors corresponding to some allowed observable (see Fig.~\ref{fig:convexproof} (right)).
Now lemma \ref{convexity} tells us that every component of $\vec{\error}_L (R)$ is convex, which implies that $\vec{\error}_L(R)\leq \lambda\vec{\error}_L(R_1)+(1-\lambda)\vec{\error}_L(R_2)$. But, by monotonicity, this also means that $\lambda\vec{\error}_L(R_1)+(1-\lambda)\vec{\error}(R_2)$ is in $\mathcal{U}_L$ again, which shows the convexity of $\mathcal{U}_L$.
\end{proof}

\subsection{Example: Phase space pairs}\label{sec:PhasSpas}
As is plainly visible from Fig.~\ref{fig:URs}, the three error criteria considered here usually give different results. However, under suitable circumstances they all coincide. This is the case for conjugate pairs related by Fourier transform \cite{WerFou}. The techniques needed to show this are the same as for the standard position/momentum case \cite{BLW2,QHA}, and in addition imply that the region for preparation uncertainty is also the same.

In the finite case there is not much to choose: We have to start from a finite abelian group, which we think of as position space, and its dual group, which is then the analogue of momentum space. The unitary connecting the two observables is the finite Fourier associated with the group. The cost function needs to be translation invariant, i.e., $c(x,y)=c(x-y)$. Then, by an averaging argument, we find for all error measures that a covariant phase space observable minimizes measurement uncertainty (all three versions). The marginals of such an observable can be simulated by first doing the corresponding reference measurement, and then adding some random noise. This implies \cite{AMU} that $\errM{}=\errC{}$. But we know more about this noise: It is independent of the input state so that the average and the maximum of the noise (as a function of the input) coincide, i.e.,  $\errC{}=\errE{}$. Finally, we know that the noise of the position marginal is distributed according to the position distribution of a certain quantum state which is, up to normalization and a unitary parity inversion, the POVM element of the covariant phase space observable at the origin. The same holds for the momentum noise. But then the two noise quantities are exactly related like the position and momentum distributions of a state, and the tradeoff curve for that problem is exactly preparation uncertainty, with variance criteria based on the same cost function.

\begin{figure}[H]
\centering
\def\svgwidth{0.5\textwidth} \input{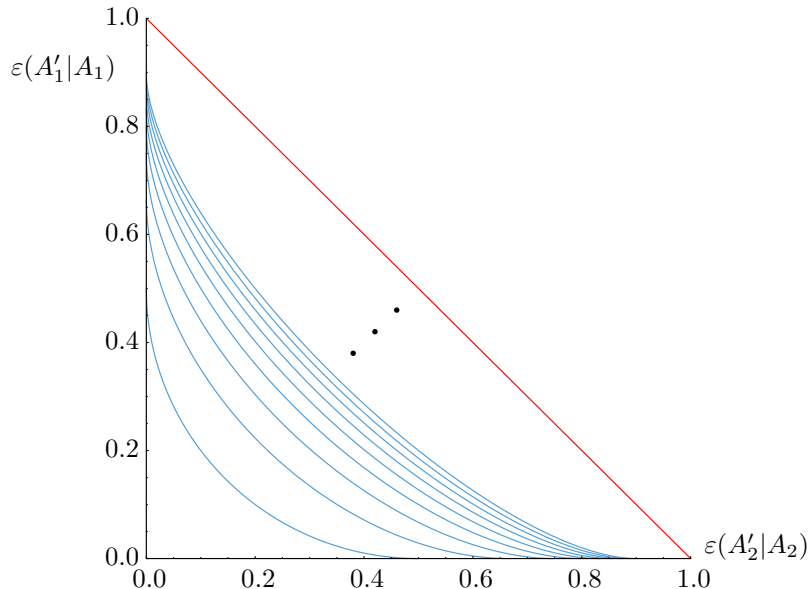}\caption{The uncertainty tradeoff curves for discrete position/momentum pairs, with discrete metric. In this case all uncertainty regions, also the one for preparation uncertainty, coincide. The parameter of the above tradeoff curves is the order $d=2,3,\ldots,10,\cdots,\infty$ of the underlying abelian group. \label{fig:phsp}}
\end{figure}

If we choose the discrete metric for $c$, the uncertainty region depends only on the number $d$ of elements in the group we started from \cite{WerFou}. The largest $\error$ for all quantities is the distance from a maximally mixed state to any pure state, which is $\Delta=(1-1/d)$. The exact tradeoff curve is then an ellipse, touching the axes at the points $(0,\Delta)$ and $(\Delta,0)$. The resulting family of curves, parameterized by $d$, is shown in Fig.~\ref{fig:phsp}. In general, however, the tradeoff curve requires the solution of a non-trivial family of ground state problems, and cannot be given in closed form. For bit strings of length $n$, and the cost some convex function of Hamming distance there is an expression for large $n$ \cite{WerFou}.
\section{Computing uncertainty regions via semidefinite programming}\label{sec:sdp}
We show here how the uncertainty regions -- and therefore optimal uncertainty relations -- corresponding to each of the three error measures can actually be computed, for any given set of projective observables $A_1,\ldots,A_n$ and cost functions $c_1,\ldots,c_n$. Our algorithms will come in the form of semidefinite programs (SDPs) \cite{boyd-vandenberghe-book,boyd-vandenberghe}, facilitating efficient numerical computation of the uncertainty regions via the many existing program packages to solve SDPs. Moreover, the accuracy of such numerical results can be rigorously certified via the duality theory of SDPs. To obtain the illustrations in this paper we used the CVX package \cite{cvx,gb08} under MATLAB.

As all our uncertainty regions $\UR L\subset\Rl^n$ (for $L=M,C,E$) are convex and closed (Sect.~\ref{sec:convex}), they are completely characterized by their supporting hyperplanes (for a reference to convex geometry see \cite{rockafellar}). Due to the monotonicity property stated in Prop.~\ref{convexity-set} some of these hyperplanes  just cut off the set parallel along the planes $x_i=c^L_i$. The only hyperplanes of interest are thus those with nonnegative normal vectors $\vec w=(w_1,\ldots,w_n)\in\Rl_+^n$ (see Fig.~\ref{fig:sdp1}). Each hyperplane is completely specified by its ``offset'' $b_L(\vec w)$ away from the origin, and this function determines $\UR L$:
\begin{eqnarray}\label{define-offset}
b_L(\vec w)&:=&\inf\Bigl\{\vec w\cdot\vec\error\Bigm| \vec\error\in\UR L\Bigr\}\ , \\
\UR L&=&\Bigl\{\vec\error\in\Rl^n\,\Bigm|\,\forall\vec w\in\Rl_+^n:\
                      \vec w\cdot\vec\error\geq b_L(\vec w)\ \mbox{and }\forall i:\error_i\leq c^L_i \Bigr\}.
\end{eqnarray}
In fact, due to homogeneity $b_L(t\vec w)=t\,b_L(\vec w)$ we can restrict everywhere to the subset of vectors $\vec w\in\Rl_+^n$ that, for example, satisfy $\sum_iw_i=1$, suggesting an interpretation of the $w_i$ as weights of the different uncertainties $\error_i$. Our algorithms will, besides evaluating $b_L(\vec w)$, also allow to compute an (approximate) minimizer $\vec\error$, so that one can plot the boundary of the uncertainty region $\UR L$ by sampling over $\vec w$, which is how the figures in this paper were obtained.
\begin{figure}[htb]
\centering \def\svgwidth{0.5\textwidth} \input{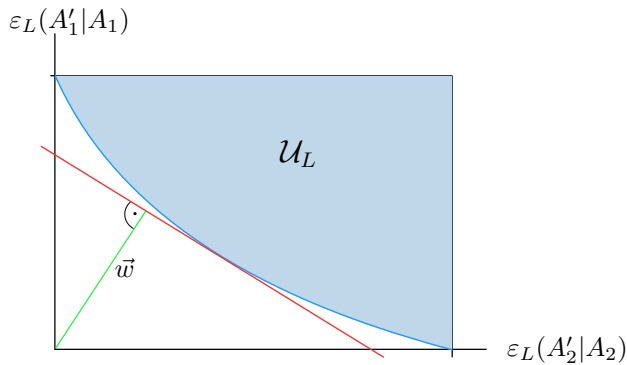}
\caption{The lower bound of the uncertainty region $\UR{L}$ can be described by its supporting hyperplanes (red line) with a normal vector $\vec{w}\in\mathbb{R}_+^n$
.}
\label{fig:sdp1}
\end{figure}

Let us further note that knowledge of $b_L(\vec w)$ for some $\vec w\in\Rl_+^n$ immediately yields a quantitative uncertainty relation: every error tuple $\vec\error\in\UR L$ attainable via a joint measurement is constrained by the affine inequality $\vec w\cdot\vec\error\geq b_L(\vec w)$, meaning that some weighted average of the attainable error quantities $\error_i$ cannot become too small. When $b_L(\vec w)>0$ is strictly positive, this excludes in particular the zero error point $\vec\error=\vec0$. The obtained uncertainty relations are \emph{optimal} in the sense that there exists $\vec\error\in\UR L$ which attains strict equality $\vec w\cdot\vec\error=b_L(\vec w)$.

Having reduced the computation of an uncertainty region essentially to determining $b_L(\vec w)$ (possibly along with an optimizer $\vec\error$), we now treat each case $L=M,C,E$ in turn.

\subsection{Computing the uncertainty region $\UR M$}\label{sec:URM}
On the face of it, the computation of the offset $b_M(\vec w)$ looks daunting: expanding the definitions we obtain
\begin{align}
b_M(\vec w)=\inf_R\,\sum_{i=1}^nw_i\sup_{\rho}\,\check c_i(\rho A'_i,\rho A_i),
\end{align}
where the infimum runs over all joint measurements $R$ with outcome set $X_1\times\ldots\times X_n$, inducing the marginal observables $A'_i=A'_i(R)$ according to (\ref{marginal}), and the supremum over all sets of $n$ quantum states $\rho_1,\ldots,\rho_n$, and where the transport costs $\cc_i(p,q)$ are given as a further infimum \eqref{c4probs} over the couplings $\gamma_i$ of $p=\rho A'_i$ and $q=\rho A_i$.

The first simplification is to replace the infimum over each coupling $\gamma_i$, via a dual representation of the transport costs, by a maximum over \emph{optimal pricing schemes} $(\Phi_\alpha,\Psi_\alpha)$, which are certain pairs of functions $\Phi_\alpha,\Psi_\alpha:X_i\to\Rl$, where $\alpha$ runs over some finite label set $\ops_i$. The characterization and computation of the pairs $(\Phi_\alpha,\Psi_\alpha)$, which depend only on the chosen cost function $c_i$ on $X_i$,  is described in the Appendix. The simplified expression for the optimal transport costs is then
\begin{equation}
\cc_i(p,q)=\max_{\alpha\in\ops_i}\sum_x\Phi_{\alpha}(x)\, p(x)-\sum_y\Psi_{\alpha}(y)q(y).
\end{equation}
We can then continue our computation of $b_M(\vec w)$:
\begin{align}
b_M(\vec w)&=\inf_R\,\sum_iw_i\sup_{\rho}\max_{\alpha\in\ops_i}\Bigl(\sum_x\Phi_\alpha(x)\tr[\rho A'_i(x)]-\sum_y\Psi_\alpha(y)\tr[\rho A_i(y)]\Bigr)\\
&=\inf_R\sum_iw_i\max_{\alpha\in\ops_i}\sup_{\rho}\tr\Bigl[\rho\Bigl(\sum_x\Phi_\alpha(x)A'_i(x)-\sum_y\Psi_\alpha(y)A_i(y)\Bigr)\Bigr]\\
&=\inf_R\sum_iw_i\max_{\alpha\in\ops_i}\maxEV\Bigl(\sum_x\Phi_\alpha(x)A'_i(x)-\sum_y\Psi_\alpha(y)A_i(y)\Bigr),
\end{align}
where $\maxEV(B_{i,\alpha})$ denotes the maximum eigenvalue of a Hermitian operator $B_{i,\alpha}$. Note that $\maxEV(B_{i,\alpha})=\inf\{\mu_i\,|\,B_{i,\alpha}\leq\mu_i\ii\}$, which one can also recognize as the dual formulation of the convex optimization $\sup_\rho\tr(\rho B_{i,\alpha})$ over density matrices, so that
\begin{equation}\label{maxxB}
  \max_{\alpha\in\ops_i}\maxEV(B_{i,\alpha})=\inf\{\mu_i\,|\,\forall\alpha\in\ops_i:\,B_{i,\alpha}\leq\mu_i\ii\}
\end{equation}
We obtain thus a single constrained minimization:
\begin{align}
b_M(\vec w)=\inf_{R,\{\mu_i\}}\Bigl\{\sum_iw_i\mu_i\bigm|\forall i\forall\alpha\in\ops_i:\,\sum_x\Phi_\alpha(x)A'_i(x)-\sum_y\Psi_\alpha(y)A_i(y)\leq\mu_i\ii\Bigr\}.
\end{align}
Making the constraints on the POVM elements $R(x_1,\ldots,x_n)$ of the joint observable $R$ explicit and expressing the maginal observables $A'_i=A'_i(R)$ directly in terms of them by (\ref{marginal}), we finally obtain the following SDP representation for the quantity $b_M(\vec w)$:
\begin{equation}\label{primalSDPM}\boxformula{\begin{array}{rl}
&b_M(\vec w)=\inf~\sum_iw_i\mu_i \\
&\text{with real variables\ }\mu_i\ \text{and $d\times d$-matrix variables\ }R(x_1,\ldots,x_n)\ \text{subject to }\\
           &\begin{array}{rl} \mu_i\ii&\geq\sum_{x_1,\ldots,x_n}\Phi_{\alpha}(x_i)\,R(x_1,\ldots,x_n)-\sum_y\Psi_{\alpha}(y)A(y)\quad\forall i\,\forall \alpha\in\ops_i\\
                   R(x_1,\ldots,x_n)&\geq0\quad\forall x_1,\ldots,x_n\\
                   \sum_{x_1,\ldots,x_n}R(x_1,\ldots,x_n)&=\ii.\end{array}\end{array}}
\end{equation}
The derivation above shows further that, when $w_i>0$, the $\mu_i$ attaining the infimum equals $\mu_i=\sup_\rho\cc_i(\rho A'_i,\rho A_i)=\errM{i}$, where $A'_i$ is the marginal coming from a corresponding optimal joint measurement $R(x_i,\ldots,x_n)$. Since numerical SDP solvers usually output an (approximate) optimal variable assignment, one obtains in this way directly a boundary point $\vec\error=(\mu_1,\ldots,\mu_n)$ of $\UR M$ when all $w_i$ are strictly positive. If $w_i=0$ vanishes, a corresponding boundary point $\vec\error$ can be computed via $\error_i=\errM{i}=\max_{\alpha\in\ops_i}\maxEV(\sum_{x_1,\ldots,x_n}\Phi_{\alpha}(x_i)\,R(x_1,\ldots,x_n)-\sum_y\Psi_{\alpha}(y)A(y))$ from an optimal assignment for the POVM elements $R(x_1,\ldots,x_n)$.

For completeness we also display the corresponding dual program \cite{boyd-vandenberghe-book} (note that strong duality holds, and the optima of both the primal and the dual problem are attained):
\begin{align}\label{dualSDPM}\boxformula{\begin{array}{rl}
&b_M(\vec w)=\sup~\tr[C]-\sum_{i,\alpha}\tr[D_{i,\alpha}\sum_y\Psi_\alpha(y)A_i(y)] \\
&\text{with $d\times d$-matrix variables\ }C\ \text{and}\ D_{i,\alpha}\ \text{subject to }\\
           &\begin{array}{rl} C&\leq\sum_{i,\alpha}\Phi_{\alpha}(x_i)D_{i,\alpha}\quad\forall x_1,\ldots,x_n\\
                   0&\leq D_{i,\alpha}\quad\forall i\,\forall \alpha\in\ops_i\\
                   w_i&=\sum_\alpha\tr[D_{i,\alpha}]\quad\forall i.\end{array}\end{array}}
\end{align}

\subsection{Computing the uncertainty region $\UR C$}
To compute the offset function $b_C(\vec w)$ for the calibration uncertainty region $\UR C$ we use the last form in (\ref{define-errC}) and recall that the projectors onto the sharp eigenstates of $A_i$ (see Sect.\ \ref{sect-define-errC}) are exactly the POVM elements $A_i(x)$ for $x\in X_i$:
\begin{align}
b_C(\vec w)&=\inf_R~\sum_iw_i\max_y\,\sum_x\tr[A'_i(x)A_i(y)]c_i(x,y)\label{definebCw}\\
&=\inf_R~\sum_iw_i\sup_{\{\lambda_{i,y}\}}\,\sum_y\lambda_{i,y}\sum_x\tr[A'_i(x)A_i(y)]c_i(x,y)\\
&=\inf_R\sup_{\{\lambda_{i,y}\}}\sum_{x_1,\ldots,x_n}\tr\Bigl[R(x_1,\ldots,x_n)\sum_{i,y}w_i\lambda_{i,y}c_i(x_i,y)A_i(y)\Bigr]\label{beforeminimaxthm}
\end{align}
where again the infimum runs over all joint measurements $R$, inducing the marginals $A'_i$, and we have turned, for each $i=1,\ldots,n$, the maximum over $y$ into a linear optimization over probabilities $\lambda_{i,y}\geq0$ ($y=1,\ldots,d$) subject to the normalization constraint $\sum_y\lambda_{i,y}=1$. In the last step, we have made the $A'_i$ explicit via (\ref{marginal}).

The first main step towards a tractable form is von Neumann's minimax theorem \cite{Nikaido,Sion}:
As the sets of joint measurements $R$ and of probabilities $\{\lambda_{i,y}\}$ are both convex and the optimization function is an affine function of $R$ and, separately, also an affine function of the $\{\lambda_{i,y}\}$, we can interchange the infimum and the supremum:
\begin{align}
b_C(\vec w)=\sup_{\{\lambda_{i,y}\}}\inf_R\sum_{x_1,\ldots,x_n}\tr\Bigl[R(x_1,\ldots,x_n)\sum_{i,y}w_i\lambda_{i,y}c_i(x_i,y)A_i(y)\Bigr].
\end{align}
The second main step is to use SDP duality \cite{boyd-vandenberghe} to turn the constrained infimum over $R$ into a supremum, abbreviating the POVM elements as $R(x_1,\ldots,x_n)=R_\xi$:
\begin{align}
\inf_{\{R_\xi\}}\Bigl\{\sum_\xi R_\xi B_\xi\Bigm| R_\xi\geq0~\forall\xi,~\sum_\xi R_\xi=\ii\Bigr\}~=~\sup_Y\Bigl\{\tr[Y]\Bigm|Y\leq B_\xi~\forall\xi\Bigr\},
\end{align}
which is very similar to a dual formulation often employed in optimal ambiguous state discrimination \cite{Holevo,Yuen}.

Putting everything together, we arrive at the following SDP representation for the offset quantity $b_C(\vec w)$:
\begin{equation}\label{primalSDPC}\boxformula{\begin{array}{rl}
&b_C(\vec w)=\sup~\tr[Y] \\
&\text{with real variables\ }\lambda_{i,y}\ \text{and a $d\times d$-matrix variable\ }Y\ \text{subject to }\\
           &\begin{array}{rl} Y&\leq\sum_{i,y}w_i\lambda_{i,y}c_i(x_i,y)A_i(y)\quad\forall x_1,\ldots,x_n\\
                   \lambda_{i,y}&\geq0\quad\forall i\,\forall y\\
                   \sum_y\lambda_{i,y}&=1\quad\forall i.\end{array}\end{array}}
\end{equation}
The dual SDP program reads (again, strong duality holds, and both optima are attained):
\begin{equation}\label{dualSDPC}\boxformula{\begin{array}{rl}
&b_C(\vec w)=\inf~\sum_iw_im_i \\
&\text{with real variables\ }m_i\ \text{and $d\times d$-matrix variables\ }R(x_1,\ldots,x_n)\ \text{subject to }\\
           &\begin{array}{rl} m_i&\geq\sum_{x_1,\ldots,x_n}\tr\bigl[R(x_1,\ldots,x_n)A_i(y)\bigl]c_i(x_i,y)\quad\forall i\,\forall y\\
                   R(x_1,\ldots,x_n)&\geq0\quad\forall x_1,\ldots,x_n\\
                   \sum_{x_1,\ldots,x_n}R(x_1,\ldots,x_n)&=\ii.\end{array}\end{array}}
\end{equation}
This dual version can immediately be recognized as a translation of Eq.\ (\ref{definebCw}) into SDP form, via an alternative way of expressing the maximum over $y$ (or via the linear programming dual of $\sup_{\{\lambda_{i,y}\}}$ from Eq.\ (\ref{beforeminimaxthm})).

To compute a boundary point $\vec\error$ of $\UR C$ lying on the supporting hyperplane with normal vector $\vec w$, it is best to solve the dual SDP (\ref{dualSDPC}) and obtain $\vec\error=(m_1,\ldots,m_n)$ from an (approximate) optimal assignment of the $m_i$. Again, this works when $w_i>0$, whereas otherwise one can compute $\error_i=\max_y\sum_{x_1,\ldots,x_n}\tr[R(x_1,\ldots,x_n)A_i(y)]c_i(x_i,y)$ from an optimal assingment of the $R(x_1,\ldots,x_n)$. From many primal-dual numerical SDP solvers (such as CVX \cite{cvx,gb08}), one can alternatively obtain optimal POVM elements $R(x_1,\ldots,x_n)$ also from solving the primal SDP (\ref{primalSDPC}) as optimal dual variables corresponding to the constraints $Y\leq\ldots$, and compute $\vec\error$ from there.

\subsection{Computing the uncertainty region $\UR E$}
As one can see by comparing the last expressions in the defining equations (\ref{define-errC}) and (\ref{define-errE}), respectively, the evaluation of $b_E(\vec w)$ is quite similar to (\ref{definebCw}), except that the maximum over $y$ is replaced by a uniform average over $y$. This simply corresponds to fixing $\lambda_{i,y}=1/d$ for all $i,y$ in Eq.\ (\ref{beforeminimaxthm}), instead of taking the supremum. Therefore, the primal and dual SDPs for the offset $b_E(\vec w)$ are
\begin{equation}\label{primalSDPE}\boxformula{\begin{array}{rl}
&b_E(\vec w)=\sup~\frac{1}{d}\tr[Y] \\
&\text{with a $d\times d$-matrix variable\ }Y\ \text{subject to }\\
           &\begin{array}{rl} Y&\leq\sum_{i,y}w_ic_i(x_i,y)A_i(y)\quad\forall x_1,\ldots,x_n.\end{array}\end{array}}
\end{equation}
and
\begin{equation}\label{dualSDPE}\boxformula{\begin{array}{rl}
&b_E(\vec w)=\inf~\frac{1}{d}\sum_i\sum_y\sum_{x_1,\ldots,x_n}w_i\tr\bigl[R(x_1,\ldots,x_n)A_i(y)\bigl]c_i(x_i,y) \\
&\text{with $d\times d$-matrix variables\ }R(x_1,\ldots,x_n)\ \text{subject to }\\
           &\begin{array}{rl} R(x_1,\ldots,x_n)&\geq0\quad\forall x_1,\ldots,x_n\\
                   \sum_{x_1,\ldots,x_n}R(x_1,\ldots,x_n)&=\ii.\end{array}\end{array}}
\end{equation}
The computation of a corresponding boundary point $\vec\error\in\UR E$ is similar as above.

\section*{Acknowledgements}
\noindent We thank Oliver Sachse and Kais Abdelkhalek for valuable discussions. The authors acknowledge financial support from the BMBF project Q.com-Q, the DFG project WE1240/20 and the ERC grant
DQSIM.

\renewcommand{\thesubsection}{A.\arabic{subsection}}
\setcounter{subsection}{0}

\section*{Appendix: Optimal Transport}\label{sec:transport}
\subsection{Kantorovich duality}

In this appendix we collect the basic theory of optimal transport adapted to the finite setting at hand. This eliminates all the topological and measure theoretic fine points that can be found, e.g., in Villani's book \cite{Villani}, which we also recommend for extended proofs of the statements in our summary. We slightly generalize the setting from the cost functions used in the main text of this paper: We allow the two variables on which the cost function depends to range over different sets. This might actually be useful for comparing observables, which then need not have the same outcome sets. Which outcomes are considered to be close or the same must be specified in terms of the cost function. We introduce this generalization here less for the sake of applications rather than for a simplification of the proofs, in particular for the book-keeping of paths in the proof of Lemma~\ref{lem:ccm=E}.

The basic setting is that of two finite sets $X$ and $Y$, and a arbitrary function $c:X\times Y\to\Rl$, called the \emph{cost function}.
The task is to optimize the transport of some distribution of stuff on $X$, described by a distribution function $p:X\to\Rl_+$, to a final distribution $q:Y\to\Rl_+$ on $Y$ when the transportation of one unit of stuff from the point $x$ to the point $y$ costs $c(x,y)$. In the first such scenario ever considered, namely by Gaspar Monge, the ``stuff'' was earth, the distribution $p$ a hill, and $q$ a fortress. Villani \cite{Villani} likes to phrase the scenario in terms of bread produced at bakeries $x\in X$ to be delivered to caf\'es $y\in Y$. This makes plain that optimal transport is sometimes considered a branch of mathematical economics, and indeed Leonid Kantorovich, who created much of the theory, received a Nobel prize in economics. In our case the ``stuff'' will be probability.

A {\it transport plan} (or \emph{coupling}) will be a probability distribution $\gamma:X\times Y\to\Rl_+$, which encodes how much stuff is moved from any $x$ to any $y$. Since all of $p$ is to be moved, $\sum_y\gamma(x,y)=p(x)$, and since all stuff is to be delivered, $\sum_x\gamma(x,y)=q(y)$. Now, for any transport plan $\gamma$ we get a total cost of $\sum_{x,y}\gamma(x,y)c(x,y)$, and we are interested in the optimum
\begin{equation}\label{primal}
  \check c(p,q)=\inf_\gamma\Bigl\{\sum_{xy}c(x,y)\gamma(x,y)\bigm| \gamma\ \mbox{couples}\ p\ \mbox{to}\ q\Bigr\}.
\end{equation}
This is called the primal problem, to which there is also a dual problem. In economic language it concerns {\it pricing schemes}, that is, pairs of functions $\Phi:X\to\Rl$ and $\Psi:Y\to\Rl$ satisfying the inequality
\begin{equation}\label{pricing}
  \Phi(x)-\Psi(y)\leq c(x,y) \quad \mbox{for all }\ x\in X,\ y\in Y,
\end{equation}
and demands to maximize
\begin{equation}\label{dual}
  \hat c(p,q)=\sup_{\Phi,\Psi}\Bigl\{\sum_{x}\Phi(x)p(x)-\sum_{y}\Psi(y)q(y)\bigm| (\Phi,\Psi)\ \mbox{is a pricing scheme}\Bigr\}.
\end{equation}

In Villani's example \cite{Villani}, think of a consortium of bakeries and caf\'es, that used to organize the transport themselves according to some plan $\gamma$. Now they are thinking of hiring a contractor, which offers to do the job, charging $\Phi(x)$ for every unit picked up from bakery $x$, and giving $\Psi(y)$ to caf\'e $y$ on delivery (these numbers can be negative). Their offer is that this will reduce overall costs, since their pricing scheme satisfies \eqref{pricing}. Indeed, the overall charge to the consortium will be
\begin{equation}\label{trivial}
  \sum_{x}\Phi(x)p(x)-\sum_{y}\Psi(y)q(y)=\sum_{xy}\bigl(\Phi(x)-\Psi(y)\bigr)\gamma(x,y)\leq\sum_{xy}c(x,y)\gamma(x,y).
\end{equation}
Taking the $\sup$ on the left hand side of this inequality (the company will try to maximize their profits by adjusting the pricing scheme $(\Phi,\Psi)$) and the $\inf$ on the right hand side (the transport plan $\gamma$ was already optimized), we get
$\hat c(p,q)\leq\check c(p,q)$. It can be shown via the general duality theory of linear programming \cite{boyd-vandenberghe-book} that the duality gap closes in this case, i.e., we actually always have
\begin{equation}\label{duality}
  \hat c(p,q)=\check c(p,q).
\end{equation}
So the consortium will face the same transport costs in the end if the contractor chooses an optimal pricing scheme. (Note that both the infimum and the supremum in the definitions of $\check c$ and $\hat c$, respectively, are attained as $X$ and $Y$ are finite sets.)

What is especially interesting for us, however, is that the structure of the optimal solutions for both variational problems is very special, and  both problems can be reduced to a combinatorial optimization over finitely many possibilities, which furthermore can be constructed independently of $p$ and $q$. Indeed, pricing schemes and transport plans are both related to certain subsets of $X\times Y$. We define $S(\gamma)\subseteq X\times Y$ as the {\it support} of $\gamma$, i.e., the set of pairs on which $\gamma(x,y)>0$. For a pricing scheme $(\Phi,\Psi)$ we define the {\it equality set} $E(\Phi,\Psi)$ as the set of points $(x,y)$ for which equality holds in \eqref{pricing}. Then equality holds in \eqref{trivial} if and only if $S(\gamma)\subset E(\Phi,\Psi)$. Note that for $\gamma$ to satisfy the marginal condition for given $p$ and $q$, its support $S(\gamma)$ cannot become too small (depending on $p$ and $q$). On the other hand, $E(\Phi,\Psi)$ cannot be too large, because the resulting system of equations for $\Phi(x)$ and $\Psi(y)$ would become overdetermined and inconsistent. The kind of set for which they meet is described in the following Definition.

\begin{Definition}\label{def:ccm}Let $X,Y$ be finite sets and $c:X\times Y\to\Rl$ a function. Then a subset $\Gamma\subset X\times Y$ is called {\bf cyclically $c$-monotone} (``ccm'' for short), if for any sequence of distinct pairs $(x_1,y_1)\in\Gamma, \ldots,(x_n,y_n)\in\Gamma$, and any permutation $\pi$ of $\{1,\ldots,n\}$ the inequality
\begin{equation}\label{ccm}
  \sum_{i=1}^n c(x_i,y_i)\leq \sum_{i=1}^n c(x_i,y_{\pi i})
\end{equation}
holds. When $\Gamma$ is not properly contained in another cyclically $c$-monotone set, it is called {\bf maximally} cyclically $c$-monotone   (``mccm'' for short).
\end{Definition}

A basic example of a ccm set is the equality set $E(\Phi,\Psi)$ for any pricing scheme $(\Phi,\Psi)$. Indeed, for $(x_i,y_i)\in E(\Phi,\Psi)$ and any permutation $\pi$ we have
\begin{equation}\label{E=ccm}
  \sum_{i=1}^n c(x_i,y_i)=\sum_{i=1}^n \bigl(\Phi(x_i)-\Psi(y_i)\bigr))=\sum_{i=1}^n \bigl(\Phi(x_i)-\Psi(y_{\pi i})\bigr)
  \leq \sum_{i=1}^n c(x_i,y_{\pi i})
\end{equation}
The role of ccm sets in the variational problems (\ref{primal}) and (\ref{dual}) is summarized in the following proposition.

\begin{Proposition}\label{prop:extremeTransport}Let $X,Y,c,p,q$ be given as above. Then
\begin{itemize}
\item[(1)] A coupling $\gamma$ minimizes \eqref{primal} if and only if $S(\gamma)$ is ccm.
\item[(2)] The dual problem  \eqref{dual} has a maximizer $(\Phi,\Psi)$ for which $E(\Phi,\Psi)$ is mccm.
\item[(3)] If $\Gamma\subseteq X\times Y$ is mccm, there is a pricing scheme $(\Phi,\Psi)$ with $E(\Phi,\Psi)=\Gamma$, and $(\Phi,\Psi)$ is uniquely determined by $\Gamma$ up to the addition of the same constant to $\Phi$ and to $\Psi$.
\end{itemize}
\end{Proposition}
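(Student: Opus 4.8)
The plan is to prove the three items in the order (1)$\Rightarrow$, then (3), and then to obtain (1)$\Leftarrow$ and (2) as short consequences of (3); the heart of the whole statement is the potential construction in item (3), which is the optimal-transport incarnation of Rockafellar's lemma on cyclically monotone sets. For the ``only if'' direction of (1), suppose $\gamma$ is optimal but $S(\gamma)$ is not ccm, so there are distinct pairs $(x_1,y_1),\dots,(x_n,y_n)\in S(\gamma)$ and a permutation $\pi$ with $\sum_i c(x_i,y_{\pi i})<\sum_i c(x_i,y_i)$. Put $\varepsilon:=\min_i\gamma(x_i,y_i)>0$ and define $\gamma'$ by subtracting $\varepsilon$ from each cell $(x_i,y_i)$ and adding $\varepsilon$ to each cell $(x_i,y_{\pi i})$. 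Distinctness of the $(x_i,y_i)$ means each of these cells loses exactly $\varepsilon\le\gamma(x_i,y_i)$, so $\gamma'\ge0$; the deletions and insertions cancel at each row $x$ and, since $\pi$ is a bijection, at each column $y$, so $\gamma'$ has the same marginals $p,q$; and the total cost changes by $-\varepsilon\bigl(\sum_i c(x_i,y_i)-\sum_i c(x_i,y_{\pi i})\bigr)<0$, contradicting optimality of $\gamma$. This is the usual ``cycle cancelling'' argument.

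For (3), let $\Gamma$ be mccm; it is nonempty because every singleton is ccm. I would first establish, using maximality, two structural facts: every $x\in X$ and every $y\in Y$ occurs in some pair of $\Gamma$, and the bipartite graph with edge set $\Gamma$ is connected --- otherwise $\Gamma$ could be enlarged while remaining ccm, since the equality set of any pricing scheme is ccm by \eqref{E=ccm}, and one produces a pricing scheme whose equality set properly contains $\Gamma$ either by adjoining a missing vertex or by shifting the potentials by a common constant on one connected component until a new cross-component equality is forced. Now fix $(\bar x,\bar y)\in\Gamma$, call a finite sequence $(x_1,y_1),\dots,(x_k,y_k)$ of elements of $\Gamma$ a \emph{chain}, assign it the value $\sum_{i=1}^k\bigl(c(x_i,y_{i-1})-c(x_i,y_i)\bigr)$ with the convention $y_0:=\bar y$, and set $\Psi(y):=\inf\{\,\text{value of a chain with }y_k=y\,\}$; connectedness guarantees such chains exist for every $y$. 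The decisive point --- and the only place ccm is used essentially --- is that this infimum is finite: a chain that returns to its start has nonnegative value, since the cyclic permutation $i\mapsto i-1$ turns the ccm inequality \eqref{ccm} into exactly this statement (first for chains with distinct pairs, then for arbitrary closed chains by decomposing into simple loops); hence every chain may be trimmed to a simple chain of no larger value, and there are only finitely many simple chains. Then set $\Phi(x):=\min_{y\in Y}\bigl(\Psi(y)+c(x,y)\bigr)$, which makes $(\Phi,\Psi)$ a pricing scheme by construction; and for $(x,y)\in\Gamma$, appending $(x,y)$ to a chain ending at an arbitrary $y'$ gives $\Psi(y)\le\Psi(y')+c(x,y')-c(x,y)$ for all $y'$, so the minimum defining $\Phi(x)$ is attained at $y$ and $\Phi(x)-\Psi(y)=c(x,y)$. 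Thus $\Gamma\subseteq E(\Phi,\Psi)$, and since $E(\Phi,\Psi)$ is ccm and $\Gamma$ is maximal, $E(\Phi,\Psi)=\Gamma$.

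Uniqueness in (3) is then immediate: if two pricing schemes have equality set $\Gamma$, their difference is constant along every edge of $\Gamma$, hence --- $\Gamma$ being connected and covering all of $X$ and $Y$ --- equal to a single constant on $X$ and on $Y$, which is exactly ``unique up to the addition of the same constant''. Finally, (1)$\Leftarrow$ and (2) drop out: if $S(\gamma)$ is ccm, enlarge it to an mccm $\Gamma$ and take the pricing scheme $(\Phi,\Psi)$ from (3) with $E(\Phi,\Psi)=\Gamma\supseteq S(\gamma)$; then every summand of \eqref{trivial} with $\gamma(x,y)>0$ has $\Phi(x)-\Psi(y)=c(x,y)$, so $\sum_x\Phi(x)p(x)-\sum_y\Psi(y)q(y)=\sum_{xy}c(x,y)\gamma(x,y)$, and combining with the weak duality inequality $\hat c(p,q)\le\check c(p,q)$ squeezes $\gamma$ into being a primal minimizer and $(\Phi,\Psi)$ a dual maximizer. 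Applying this to \emph{any} optimal coupling $\gamma^*$ (one exists by compactness of the coupling polytope, and $S(\gamma^*)$ is ccm by (1)$\Rightarrow$) yields a dual maximizer whose equality set is the mccm set $\Gamma\supseteq S(\gamma^*)$, proving (2).

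The main obstacle is the analysis in (3): making the chain/loop bookkeeping airtight, so that $\Psi$ is well defined and bounded below (this is where cyclic $c$-monotonicity genuinely enters), and proving the two structural facts about mccm sets --- full occupation of $X$ and $Y$, and connectedness --- that underpin both the construction and the uniqueness statement. The remaining parts, including the derivations of (1)$\Leftarrow$ and (2), are then routine consequences of the weak-duality inequality \eqref{trivial} together with compactness of the set of couplings.
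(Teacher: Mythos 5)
Your proof follows essentially the same route as the paper: the cycle-cancelling argument for the ``only if'' half of (1), a Rockafellar-type potential construction along $\Gamma$-adapted chains (this is exactly the paper's Lemma~\ref{lem:ccm=E}; your chain values coincide with the paper's path costs $c(z_1,\ldots,z_n)$ up to sign conventions and the choice of anchor vertex), and weak duality \eqref{trivial} to close the gap. The only structural difference is the order of deduction: the paper proves (2) directly, by shifting a dual maximizer's potentials on connected components of its equality set until that set is connected, and then reads off (3); you prove (3) first and obtain (2) by extending the support of an optimal coupling to an mccm set. That alternative is legitimate (every ccm set sits inside an mccm set by finiteness), and it makes the logical dependence on the duality theorem slightly cleaner.

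One point does need untangling: as written, your proof of (3) is circular. You use connectedness of $\Gamma$ to guarantee that the chains defining $\Psi(y)$ exist for every $y$, but you propose to prove connectedness by shifting the potentials of a pricing scheme with $E(\Phi,\Psi)\supseteq\Gamma$ --- a pricing scheme whose existence is precisely what the chain construction is supposed to deliver. The fix is the one implicit in the paper's arrangement: first prove, for an \emph{arbitrary} ccm set $\Gamma$ (not assumed connected or covering), that some pricing scheme has $E(\Phi,\Psi)\supseteq\Gamma$; your chain construction does this if you run it on each connected component of $\Gamma$ with its own anchor and set $\Phi(x)=\min_y\bigl(\Psi(y)+c(x,y)\bigr)$ for any $x$ not appearing in $\Gamma$. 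Only then invoke maximality --- via your vertex-adjoining and component-shifting arguments --- to conclude that an mccm set covers $X$ and $Y$ and is connected, which yields both $E(\Phi,\Psi)=\Gamma$ and the uniqueness-up-to-a-common-constant claim. With that reordering, everything you wrote goes through, and the remaining bookkeeping (nonnegativity of closed-chain values from \eqref{ccm} via cycle decomposition, hence finiteness of the infimum over the finitely many simple chains) is exactly the content of the paper's proof of Lemma~\ref{lem:ccm=E}.
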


\begin{proof}[Sketch of proof]\strut\\
(1) Suppose  $(x_i,y_i)\in S(\gamma)$ ($i=1,\ldots,n$), and let $\pi$ be any permutation. Set $\delta=\min_i\gamma(x_i,y_i)$. Then we can modify $\gamma$  by subtracting $\delta$ from any $\gamma(x_i,y_i)$ and adding $\delta$ to $\gamma(x_i,y_{\pi i})$. This operation keeps $\gamma\geq0$ and does not change the marginals. The target functional in the infimum (\ref{primal}) is changed by $\delta$ times the difference of the two sides of \eqref{ccm}. For a minimizer $\gamma$ this change must be $\geq0$, which gives inequality \eqref{ccm}. For the converse we need a Lemma, whose proof will be sketched below.

\begin{Lemma}\label{lem:ccm=E}
For any ccm set $\Gamma$ there is some pricing scheme $(\Phi,\Psi)$ with $E(\Phi,\Psi)\supseteq\Gamma$.
\end{Lemma}

\noindent By applying this to $\Gamma=S(\gamma)$ we find that the duality gap closes for $\gamma$, i.e., equality holds in \eqref{trivial}, and hence $\gamma$ is a minimizer.

(2) Every subset $\Gamma\subset X\times Y$ can be thought of as a bipartite graph with vertices $X\cup Y$ and an edge joining $x\in X$ and $y\in Y$ iff $(x,y)\in\Gamma$ (see Fig.~\ref{fig:graph}). We call $\Gamma$ connected, if any two vertices are linked by a sequence of edges. Consider now the equality set $E(\Phi,\Psi)$ of some pricing scheme. We modify $(\Phi,\Psi)$ by picking some connected component, and setting  $\Phi'(x)=\Phi(x)+a$ and $\Psi'(y)=\Psi(y)+a$ for all $x,y$ in that component.  If $\abs a$ is sufficiently small, $(\Phi',\Psi')$ will still satisfy all the inequalities \eqref{pricing}, and $E(\Phi',\Psi')=E(\Phi,\Psi)$. The target functional in the optimization (\ref{dual}) depends linearly on $a$, so moving in the appropriate direction will increase, or at least not decrease it. We can continue until another one of the inequalities \eqref{pricing} becomes tight. At this point $E(\Phi',\Psi')\supsetneq E(\Phi,\Psi)$. This process can be continued until the equality set $E(\Phi,\Psi)$ is connected. Then $(\Phi,\Psi)$ is uniquely determined by $E(\Phi,\Psi)$ up to a common constant.

\begin{figure}[htb]
\centering
\def\xs{7} \def\ys{9} 
\definecolor{blu}{rgb}{0.49,0.68,0.82}    
\def\lk#1#2{\draw[fill=blu] (#1+.05,#2+.05) rectangle +(.9,.9);
            \draw[thick] (\xs,1.2*#1) node[circle,draw,fill=white] {#1}-- (\ys,1.2*#2)
                                       node[circle,draw,fill=white] {#2};   }
\def\lab#1{\node at (.5+#1,.5) {#1}; \node at (.5,.5+#1) {#1};}
\begin{tikzpicture}
  \draw (1,1) grid (5,5) ;
  \lk11 \lk13 \lk21 \lk32 \lk33 \lk34 \lk43
  \lab1 \lab2 \lab3 \lab4
  \node at (3,-.2) {$X$} ;   \node at (-.2,3) {$Y$} ;
  \node at (\xs,0) {$X$} ;   \node at (\ys,0) {$Y$} ;
\end{tikzpicture}
\caption{Representation of a subset $\Gamma\subset X\times Y$ (left) as a bipartite graph (right). The graph is a connected tree.}
\label{fig:graph}
\end{figure}

It remains to show that connected equality sets $E(\Phi,\Psi)$ are mccm. Suppose that $\Gamma\supseteq E(\Phi,\Psi)$ is ccm. Then by Lemma \ref{lem:ccm=E} we can find a pricing scheme $(\Phi',\Psi')$ with $E(\Phi',\Psi')\supseteq E(\Phi,\Psi)$. But using just the equalities in (\ref{pricing}) coming from the connected $E(\Phi,\Psi)$, we already find that $\Phi'=\Phi+a$ and $\Psi'=\Psi+a$, so we must have $E(\Phi',\Psi')=E(\Phi,\Psi)$.

(3) is trivial from the proof of (2) that mccm sets are connected.
\end{proof}

\begin{proof}[Proof sketch of Lemma~\ref{lem:ccm=E}]
Our proof will give some additional information on the set of all pricing schemes that satisfy $E(\Phi,\Psi)\supset\Gamma$ and $\Phi(x_0)=0$ for some reference
point $x_0\in X$ to fix the otherwise arbitrary additive constant. Namely we will explicitly construct the largest element $(\Phi_+,\Psi_+)$ of this set and the
smallest $(\Phi_-,\Psi_-)$, so that all other schemes $(\Phi,\Psi)$ satisfy
\begin{equation}\label{orderedprice}
  \Phi_-(x)\leq\Phi(x)\leq\Phi_+(x) \quad\mbox{and}\quad \Psi_-(y)\leq\Psi(y)\leq\Psi_+(y)
\end{equation}
for all $x\in X$ and $y\in Y$. The idea is to optimize the sums of certain costs over paths in $X\cup Y$.

We define a  $\Gamma$-adapted path as a sequence of vertices $z_1,\ldots,z_n\in X\cup Y$ such that the $z_i\in X\Rightarrow(z_i,z_{i+1})\in\Gamma$, and $z_i\in Y\Rightarrow z_{i+1}\in X$. For such a path we define
\begin{equation}\label{pathcost}
  c(z_1,\ldots,z_n)=\sum_{i=1}^{n-1}c(z_i,z_{i+1}),
\end{equation}
with the convention  $c(y,x):=-c(x,y)$  for $x\in X,\ y\in Y$. Then $\Gamma$ is ccm if and only if $c(z_1,\ldots,z_n,z_1)\leq0$ for every $\Gamma$-adapted closed path. This is immediate for cyclic permutations, and follows for more general ones by cycle decomposition. The assertion of Lemma \ref{lem:ccm=E} is trivial if $\Gamma=\emptyset$, so we can pick a point $x_0\in X$ for which some edge $(x_0,y)\in\Gamma$ exists. Then, for any $z\in X\cup Y$, we define, for $z\neq x_0$,
\begin{equation}\label{pathcostpm}
  \chi_+(z):=-\sup c(x_0,\ldots,z) \quad\mbox{and }\ \chi_-(z):=\sup c(z,\ldots,x_0),
\end{equation}
where the suprema are over all $\Gamma$-adapted paths between the specified endpoints, we define $\chi_+(x_0):=\chi_-(x_0):=0$, and empty suprema are defined as $-\infty$. Then $\chi_\pm$ are the maximal and minimal pricing schemes, when written as two functions $\Phi_\pm(x)=\chi_\pm(x)$ and $\Psi_\pm(y)=\chi_\pm(y)$ for $x\in X$ and $y\in Y$.

For proving these assertions, consider paths of the type $(x_0,\ldots,y,x)$. For this to be $\Gamma$-adapted, there is no constraint on the last link, so
\begin{equation}\label{minuschi}
  -\chi_+(y) -c(x,y)\leq -\chi_+(x), \quad\mbox{and\ } \sup_y\bigl\{-\chi_+(y) -c(x,y)\bigr\}=\chi_+(x).
\end{equation}
Here the inequality follows because the adapted paths $x_0\to x$ going via $y$ as the last step are a subclass of all adapted paths and give a smaller supremum. The second statement follows, because for $x\neq x_0$ there has to be some last step from $Y$ to $x$. The inequality \eqref{minuschi} also shows that $(\Phi_+,\Psi_+)$ is a pricing scheme. The same argument applied to the decomposition of paths $(x_0,\ldots,x,y)$ with $(x,y)\in\Gamma$
gives the inequality
\begin{equation}\label{minuschiGamma}
  -\chi_+(x)+c(x,y)\leq -\chi_+(y) \quad\mbox{for\ } (x,y)\in\Gamma.
\end{equation}
Combined with inequality \eqref{minuschi} we get that $(\Phi_+,\Psi_+)$ has equality set $E(\Phi_+,\Psi_+)$ at least $\Gamma$. The corresponding statements for $\chi_-$ follow by first considering paths $(y,x,\ldots,x_0)$ and then $(x,y\ldots,x_0)$ with $(x,y)\in\Gamma$.

Finally, in order to show the inequalities \eqref{orderedprice}, let $(\Phi,\Psi)$ be a tight pricing scheme with $\Phi(x_0)=0$ and $E(\Phi,\Psi)\supset\Gamma$. Consider first any $\Gamma$-adapted path $(x_0,y_0,x_1,\ldots,x_n,y)$. Then,
\begin{eqnarray}\label{fipsitight}
  c(x_0,\ldots,x_n,y)&=&\sum_{i=0}^{n-1}\bigl(\Phi(x_i)-\Psi(y_i)-c(x_{i+1},y_i)\bigl) +\Phi(x_n)-\Psi(y) \nonumber\\
                   &=&\Phi(x_0)-\Psi(y) + \sum_{i=0}^{n-1}\bigl(\Phi(x_{i+1})-\Psi(y_i)-c(x_{i+1},y_i)\bigl) \nonumber\\
                   &\leq&\Phi(x_0)-\Psi(y)=-\Psi(y),
\end{eqnarray}
because the sum is termwise non-positive due to the pricing scheme property. Hence by taking the supremum we get $\chi_+(y)\geq \Psi(y)$. The other inequalities follow with the same arguments applied to paths of the type
$(x_0,\ldots,y_n,x)$, $(x,y_0,\ldots,x_0)$, and $(y,x_1,\ldots,x_0)$.
\end{proof}

Let us summarize the consequences of Proposition~\ref{prop:extremeTransport} for the computation of minimal costs (\ref{primal}). Given any cost function $c$, the first step is to enumerate the corresponding mccm sets, say $\Gamma_\alpha$, $\alpha\in\ops$, for some \emph{finite} label set $\ops$, and to compute for each of these the pricing scheme $(\Phi_\alpha,\Psi_\alpha)$ (up to an overall additive constant, see Proposition \ref{prop:extremeTransport}). This step depends only on the chosen cost function $c$. Then, for any distributions $p,q$ we get
\begin{equation}\label{maxalpha}
  \hat c(p,q)=\check c(p,q)=\max_{\alpha\in\ops}\sum_{x}\Phi_\alpha(x)p(x)-\sum_{y}\Psi_\alpha(y)q(y).
\end{equation}
This is very fast to compute, so the preparatory work of determining the $(\Phi_\alpha,\Psi_\alpha)$ is well invested if many such expressions have to be computed. However, even more important for us that \eqref{maxalpha} simplifies the variational problem sufficiently so that we can combine it with the optimization over joint measurements (see Sect.\ \ref{sec:URM}). Of course, this leaves open the question of how to determine all mccm sets for a cost function. Some remarks about this will be collected in the next subsection.

\subsection{How to find all mccm sets}

We will begin with a basic algorithm for the general finite setting, in which $X,Y$, and the cost function $c$ are arbitrary. Often the task can be greatly simplified if more structure is given. These simplifications will be described in the following sections.

The basic algorithm will be a growth process for ccm subsets $\Gamma\subseteq X\times Y$, which stops as soon as $\Gamma$ is connected (cf.\ the proof of Proposition \ref{prop:extremeTransport}(2)). After that, we can compute the unique pricing scheme $(\Phi,\Psi)$ with equality on $\Gamma$ by solving the system of linear equations with $(x,y)\in\Gamma$ from (\ref{pricing}). This scheme may have additional equality pairs extending $\Gamma$ to an mccm set. Hence, the same $(\Phi,\Psi)$ and mccm sets may arise from another route of the growth process. Nevertheless, we can stop the growth when $\Gamma$ is connected, and eliminate doubles as a last step of the algorithm. The main part of the algorithm will thus aim at finding all {\it connected ccm trees}, where by definition a tree is a graph containing no cycles. We take each tree to be given by a list of edges $(x_1,y_1),\ldots(x_N,y_N)$, which we take to be written in lexicographic ordering, relative to some arbitrary numberings $X=\{1,\ldots,\abs X\}$ and $Y=\{1,\ldots,\abs Y\}$. Hence the first element in the list will be $(1,y)$, where $y$ is the first element connected to $1\in X$.

At stage $k$ of the algorithm we will have a list of all possible initial sequences $(x_1,y_1),\ldots(x_k,y_k)$ of lexicographically ordered ccm trees. For each such sequence the possible next elements will be determined, and all the resulting edge-lists of length $k+1$ form the next stage of the algorithm. Now suppose we have some list
$(x_1,y_1),\ldots(x_k,y_k)$. What can the next pair $(x',y')$ be? There are two possibilities:
\begin{itemize}
\item[(a)] $x'=x_k$ is unchanged. Then lexicographic ordering dictates that $y'>y_k$. Suppose that $y'$ is already connected to some $x<x_k$. Then adding the edge $(x_k,y')$ would imply that $y'$ could be reached in two different ways from the starting node ($x=1$). Since we are looking only for trees, we must therefore restrict to only those $y'>y_k$ which are yet unconnected.
\item[(b)] $x$ is incremented. Since in the end all vertices $x$ must lie in one connected component, the next one has to be $x'=x_k+1$. Since the graphs at any stage should be connected, $y'$ must be a previously connected $Y$-vertex.
\end{itemize}
With each new addition we also check the ccm property of the resulting graph. The best way to do this is to store with any graph the functions $\Phi,\Psi$ on the set of already connected nodes (starting from $\Phi(1)=0$), and update them with any growth step. We then only have to verify inequality \eqref{pricing} for every new node paired with every old one. Since the equality set of any pricing scheme is ccm, this is sufficient.
The algorithm will stop as soon as all nodes are included, i.e., after $\abs X+\abs Y-1$ steps.

\subsection{The linearly ordered case}
When we look at standard quantum observables, given by a Hermitian operator $A$, the outcomes are understood to be the eigenvalues of $A$, i.e., real numbers. Moreover, we typically look at cost functions which depend on the difference $(x-y)$ of two eigenvalues, i.e.,
\begin{equation}\label{chR}
  c(x,y)=h\bigl(x-y\bigr).
\end{equation}
For the Wasserstein distances one uses $h(t)=\abs t^\alpha$ with $\alpha\geq1$. The following Lemma allows, in addition, arbitrary convex, not necessarily even  functions $h$.

\begin{Lemma}
Let $h:\Rl\to\Rl$ be convex, and $c$ be given by \eqref{chR}. Then for $x_1\leq x_2$ and $y_1\leq y_2$ we have
\begin{equation}\label{oneDineq}
  c(x_1,y_1)+c(x_2,y_2)\leq c(x_1,y_2)+c(x_2,y_1),
\end{equation}
with strict inequality if $h$ is strictly convex, $x_1<x_2$ and $y_1<y_2$.
\end{Lemma}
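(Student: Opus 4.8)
The plan is to reduce \eqref{oneDineq} to the one-line fact that a convex function is superadditive under ``spreading'': if $h$ is convex and $u'\le u\le v\le v'$ with $u+v=u'+v'$, then $h(u)+h(v)\le h(u')+h(v')$, with strict inequality when $h$ is strictly convex and $u'<u<v'$.

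First I introduce the four differences $a=x_1-y_1$, $b=x_2-y_2$, $a'=x_1-y_2$, $b'=x_2-y_1$, so that \eqref{oneDineq} becomes $h(a)+h(b)\le h(a')+h(b')$. From $x_1\le x_2$ and $y_1\le y_2$ one reads off at once that $a+b=a'+b'=x_1+x_2-y_1-y_2$, and, since the difference $x-y$ decreases in $y$ and increases in $x$, that $a'\le a$, $a'\le b$, $a\le b'$ and $b\le b'$. Thus $a'=\min\{a',b'\}$, $b'=\max\{a',b'\}$, and both $a$ and $b$ lie in the interval $[a',b']$.

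Next, I choose $\lambda\in[0,1]$ with $a=(1-\lambda)a'+\lambda b'$; this is possible because $a'\le a\le b'$ (the degenerate case $a'=b'$ forces $a=b=a'=b'$, so that \eqref{oneDineq} holds with equality, and can be set aside). Since $a+b=a'+b'$, the identity $b=\lambda a'+(1-\lambda)b'$ follows automatically. Convexity of $h$ then gives
\begin{equation*}
  h(a)\le(1-\lambda)h(a')+\lambda h(b'),\qquad h(b)\le\lambda h(a')+(1-\lambda)h(b'),
\end{equation*}
and adding these two inequalities yields $h(a)+h(b)\le h(a')+h(b')$, i.e.\ \eqref{oneDineq}.

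For the strict statement, suppose in addition that $h$ is strictly convex, $x_1<x_2$ and $y_1<y_2$. Then $a'<a$ strictly (because $y_1<y_2$) and $a<b'$ strictly (because $x_1<x_2$), so $a'<b'$ and $\lambda\in(0,1)$; hence both displayed inequalities are strict and so is their sum. There is no genuine obstacle in this argument: the only things that need care are the sign bookkeeping in the four monotonicity comparisons and the degenerate case $a'=b'$. One could instead cite the Monge/supermodularity property of convex functions directly, but writing out the single convex combination as above keeps the treatment of the strict case completely transparent.
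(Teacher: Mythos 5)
Your proof is correct and is essentially the paper's own argument: both choose a single $\lambda\in[0,1]$ expressing $x_1-y_1$ and $x_2-y_2$ as complementary convex combinations of $x_1-y_2$ and $x_2-y_1$, apply convexity of $h$ twice, and add. The strict case is handled the same way in both, via $\lambda\in(0,1)$ and strict convexity; your explicit bookkeeping with $a,b,a',b'$ and the degenerate case $a'=b'$ is just a more verbose rendering of the same step.
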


\begin{proof}Since $x_2-x_1\geq0$ and $y_2-y_1\geq0$, there exists $\lambda\in[0,1]$ such that $(1-\lambda)(x_2-x_1)=\lambda(y_2-y_1)$. This implies $x_1-y_1=\lambda(x_1-y_2)+(1-\lambda)(x_2-y_1)$, so that convexity of $h$ gives $c(x_1,y_1)=h(x_1-y_1)\leq\lambda h(x_1-y_2)+(1-\lambda)h(x_2-y_1)=\lambda c(x_1,y_2)+(1-\lambda)c(x_2,y_1)$. The same choice of $\lambda$ also implies $x_2-y_2=(1-\lambda)(x_1-y_2)+\lambda(x_2-y_1)$, so that similarly $c(x_2,y_2)\leq(1-\lambda)c(x_1,y_2)+\lambda c(x_2,y_1)$. Adding up the two inequalities yields the desired result. If $x_1<x_2$ and $y_1<y_2$ are strict inequalities, then $\lambda\in(0,1)$, so that strict convexity of $h$ gives a strict overall inequality.
\end{proof}

As a consequence, if $\Gamma$ is a ccm set for the cost function $c$ and $(x_1,y_1)\in\Gamma$, then all $(x,y)\in\Gamma$ satisfy either $x\leq x_1$ and $y\leq y_1$ or $x\geq x_1$ and $y\geq y_1$. Loosely speaking, while in $\Gamma$, one can only move north-east or south-west, but never north-west or south-east.

This has immediate consequences for ccm sets: In each step in the lexicographically ordered list (see the algorithm in the previous subsection) one either has to increase $x$ by one or increase $y$ by one, going from $(1,1)$ to the maximum. This is a simple drive on the Manhattan grid, and is parameterized by the instructions on whether to go north or east in every step. Of the $\abs X+ \abs Y-2$ necessary steps, $\abs X-1$ have to go in the east direction, so altogether we will have at most
\begin{equation}\label{ManhatCount}
  r=\begin{pmatrix}\abs X+ \abs Y-2\\\abs X-1 \end{pmatrix}
\end{equation}
mccm sets and pricing schemes. They are quickly enumerated without going through the full tree search described in the previous subsection.

\subsection{The metric case}

Another case in which a little bit more can be said is the following \cite[Case~5.4,\ p.56]{Villani}:

\begin{Lemma}\label{lem:metric}
Let $X=Y$, and consider a cost function $c(x,y)$ which is a metric on $X$. Then:\\
\noindent(1)\ Optimal pricing schemes satisfy $\Phi=\Psi$, and the Lipshitz condition $\abs{\Phi(x)-\Phi(y)}\leq c(x,y)$.\\
\noindent(2)\ All mccm sets contain the diagonal.
\end{Lemma}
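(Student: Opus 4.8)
The plan is to exploit the triangle inequality satisfied by $c$ to squeeze any optimal pricing scheme until $\Phi=\Psi$, and then to use the resulting Lipschitz property together with Proposition~\ref{prop:extremeTransport}(3) to force every mccm set to contain the diagonal. First I would prove (1). Fix an optimal pricing scheme $(\Phi,\Psi)$, so that $\Phi(x)-\Psi(y)\leq c(x,y)$ for all $x,y$, with equality on a mccm set $E(\Phi,\Psi)$. Taking $x=y$ gives $\Phi(x)\leq\Psi(x)$ for every $x$. To get the reverse inequality I would argue that if $\Phi(x_0)<\Psi(x_0)$ strictly for some $x_0$, one can raise $\Psi(x_0)$ — wait, that goes the wrong way; instead one \emph{lowers} $\Psi(x_0)$ toward $\Phi(x_0)$. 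Lowering $\Psi(x_0)$ only relaxes the constraints $\Phi(x)-\Psi(x_0)\leq c(x,x_0)$, and since $\Psi$ enters the objective \eqref{dual} with a minus sign (via $-\sum_y\Psi(y)q(y)$, or in our application via the max-eigenvalue/worst-state formulation), decreasing $\Psi(x_0)$ can only increase the objective. Hence without loss of optimality $\Psi(x_0)=\Phi(x_0)$, and doing this at every point yields $\Phi=\Psi$. Once $\Phi=\Psi$, the pricing inequality becomes $\Phi(x)-\Phi(y)\leq c(x,y)$ for all $x,y$; swapping $x$ and $y$ and using symmetry of the metric gives $|\Phi(x)-\Phi(y)|\leq c(x,y)$, which is the claimed Lipschitz condition.

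For (2), let $\Gamma$ be an mccm set. By Proposition~\ref{prop:extremeTransport}(3) there is a pricing scheme $(\Phi,\Psi)$ with $E(\Phi,\Psi)=\Gamma$. I would like to replace it by the optimal one from part (1) with $\Phi=\Psi$, but a subtlety is that part (1) was phrased for \emph{optimal} pricing schemes relative to fixed $p,q$, whereas the mccm set here is given abstractly. The cleaner route is to rerun the normalization argument purely at the level of pricing schemes: starting from any $(\Phi,\Psi)$ with $E(\Phi,\Psi)=\Gamma$, the substitution above (lowering $\Psi$ pointwise to $\Phi$ wherever $\Psi>\Phi$, which is possible since this preserves all inequalities \eqref{pricing} and can only enlarge the equality set) produces $(\Phi,\Phi)$ with $E(\Phi,\Phi)\supseteq\Gamma$; by maximality $E(\Phi,\Phi)=\Gamma$. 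Now the equality set of $(\Phi,\Phi)$ automatically contains every diagonal pair $(x,x)$, since $\Phi(x)-\Phi(x)=0=c(x,x)$. Hence $\Gamma\supseteq\{(x,x):x\in X\}$, which is (2).

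The main obstacle I anticipate is making the ``lower $\Psi$ to match $\Phi$'' step fully rigorous in a way that is simultaneously valid for the abstract mccm statement in (2) and consistent with ``optimal'' in (1): one must check that decreasing $\Psi$ at a point never \emph{violates} \eqref{pricing} (true, since $c\geq 0$ and the left side $\Phi(x)-\Psi(y)$ only decreases when $\Psi(y)$ is lowered — actually it \emph{increases}, so one must instead verify the binding constraint is $\Phi(x_0)-\Psi(x_0)\le c(x_0,x_0)=0$, i.e. one can lower $\Psi(x_0)$ only down to $\Phi(x_0)$, not below), and that the equality set does not shrink. A careful accounting shows lowering $\Psi(x_0)$ to the value $\Phi(x_0)$ makes the pair $(x_0,x_0)$ an equality pair and destroys no pre-existing equality pair $(x,x_0)$ unless $\Phi(x)-\Phi(x_0)=c(x,x_0)$ already forced $\Psi(x_0)=\Phi(x_0)$ — but the triangle inequality $c(x,x_0)\le c(x,x_0)$ is vacuous here, so one needs $\Phi(x)-\Phi(x_0)\le c(x,x_0)$, which is exactly the Lipschitz bound that holds once $\Phi=\Psi$ elsewhere. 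This circularity is resolved by doing the pointwise lowering in any order and only invoking the final Lipschitz property at the end; I would write this as a short monotone-iteration argument rather than a one-shot replacement.
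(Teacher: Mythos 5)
Your overall strategy (squeeze $\Psi$ down to $\Phi$, then read off the diagonal from the equality set of $(\Phi,\Phi)$, using Proposition~\ref{prop:extremeTransport}(3) and maximality for part (2)) is reasonable, but the central step --- getting $\Phi=\Psi$ --- has a genuine gap, and you can see this from the fact that your argument never invokes the triangle inequality. The statement is \emph{false} without it: take $X=\{1,2,3\}$ with a symmetric $c$ vanishing on the diagonal but with $c(1,3)=10$, $c(1,2)=c(2,3)=1$, and $p=\delta_1$, $q=\delta_3$. Optimality forces $\Phi(1)-\Psi(3)=10$, and the constraints \eqref{pricing} then give $\Psi(2)\geq\Phi(1)-1=9$ while $\Phi(2)\leq\Psi(3)+1=1$, so no optimal scheme has $\Phi=\Psi$. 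Concretely, the flaw is where you anticipated it but did not resolve it: lowering $\Psi$ pointwise, in any order and however often you iterate, only brings $\Psi$ down to $\Psi_1(y)=\max_x\bigl(\Phi(x)-c(x,y)\bigr)$, which satisfies $\Psi_1\geq\Phi$ but can remain strictly above $\Phi$ (as in the example). Your claimed escape from the circularity --- ``the Lipschitz bound holds once $\Phi=\Psi$ elsewhere'' --- does not hold either, since at the points where $\Psi$ still exceeds $\Phi$ you get no control on $\Phi(x)-\Phi(x_0)$.

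Two ways to close the gap. One is the route the paper takes: for an optimal (tight) scheme one has both $\Psi(y)=\max_{x'}\bigl(\Phi(x')-c(x',y)\bigr)$ and $\Phi(x)=\min_y\bigl(\Psi(y)+c(x,y)\bigr)$; the first identity plus the triangle inequality shows $\Psi$ is $1$-Lipschitz with respect to $c$, and the second identity plus that Lipschitz bound gives $\Phi(x)\geq\Psi(x)$, which combined with your (correct) $\Phi\leq\Psi$ yields $\Phi=\Psi$ and the diagonal in the equality set. The other is to complete your iteration properly as a \emph{double} $c$-transform: after lowering $\Psi$ to $\Psi_1$ you must also \emph{raise} $\Phi$ to $\Phi_1(x)=\min_y\bigl(\Psi_1(y)+c(x,y)\bigr)$; both steps only enlarge the equality set, $\Phi_1$ is automatically $1$-Lipschitz (here the triangle inequality enters, as a minimum of $1$-Lipschitz functions), and for a $1$-Lipschitz $\Phi_1$ the next lowering gives $\Psi_2=\Phi_1$ exactly. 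With that repair your part (2) goes through as you wrote it, and is in fact a cleaner justification of ``all mccm sets contain the diagonal'' than the paper's one-line remark.
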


\begin{proof}
Any pricing schemes satisfies $\Phi(x)-\Psi(x)\leq c(x,x)=0$, i.e., $\Phi(x)\leq\Psi(x)$.
For an optimal scheme, and $y\in X$, we can find $x'$ such that  $\Psi(y)=\Phi(x')-c(x',y)$. Hence
\begin{equation}\label{Lipsh}
  \Psi(y)-\Psi(x)\leq \bigl(\Phi(x')-c(x',y)\bigr)+\bigl(c(x',x)-\Phi(x')\bigr)\leq c(y,x).
\end{equation}
By exchanging $x$ and $y$ we get $\abs{\Psi(y)-\Psi(x)}\leq c(y,x)$. Moreover, given $x$, some $y$ will satisfy
\begin{equation}\label{Phimetric}
  \Phi(x)=\Psi(y)+c(x,y)\geq \Psi(x),
\end{equation}
which combined with the previous first inequality gives $\Phi=\Psi$. In particular, every $(x,x)$ belongs to the equality set.
\end{proof}

One even more special case is that of the discrete metric, $c(x,y)=1-\delta_{xy}$. In this case it makes no sense to look at error exponents, because $c(x,y)^\alpha=c(x,y)$. Moreover, the Lipshitz condition
$\abs{\Phi(x)-\Phi(y)}\leq c(x,y)$ is vacuous for $x=y$, and otherwise only asserts that $\Phi(x)-\Phi(y)\leq1$, which after adjustment of a constant just means that $\abs{\Phi(x)}\leq1/2$ for all $x$.
Hence the transportation cost is just the $\ell^1$ norm up to a factor, i.e.,
\begin{equation}\label{discreteMet}
  \cc(p,q)=\frac12\sup_{\abs\Phi\leq1}\sum_x(p(x)-q(x))\Phi(x)=\frac12\sum_x\abs{p(x)-q(x)}.
\end{equation}


\bibliography{sections/MUR}

\end{document}